\theoremstyle{plain}
\newtheorem{Theorem}{Theorem}
\newtheorem{Lemma}{Lemma}
\newtheorem{Assumption}{Assumption}
\newtheorem{Problem}{Problem}
\newtheorem{Proposition}{Proposition}
\theoremstyle{definition}
\newtheorem{Definition}{Definition}
\theoremstyle{remark}
\newtheorem{Remark}{Remark}
\newcommand{\wesley}[1]{{\textcolor{black}{#1}}}
\begin{document}

\title{Observer-Based Safety Monitoring of Nonlinear Dynamical
Systems with Neural Networks via Quadratic
Constraint Approach}

\author{Tao Wang, Yapeng Li, Zihao Mo, Wesley Cooke and Weiming Xiang 
\thanks{This work was supported by the National Science Foundation, under NSF CAREER Award 2143351, NSF CNS Award no. 2223035, and NSF IIS Award no. 2331938. }    
\thanks{Tao Wang and Yapeng Li are with the School of Electrical Engineering, Southwest Jiaotong University, Chengdu, China. Emails: wangtao618@126.com (T. Wang), liyapeng@my.swjtu.edu.cn (Y. Li)}
\thanks{Zihao Mo, Wesley Cooke, and Weiming Xiang are with the School of Computer and Cyber Sciences, Augusta University, Augusta GA 30912, USA. Emails: zmo@augusta.edu (Z. Mo), wxiang@augusta.edu (W. Xiang), wcooke@augusta.edu (W. Cooke)}
}

\maketitle 

\begin{abstract}
The safety monitoring for nonlinear dynamical systems with embedded neural network components is addressed in this paper. The interval-observer-based safety monitor is developed consisting of two auxiliary neural networks derived from the neural network components of the dynamical system. Due to the presence of nonlinear activation functions in neural networks, we use quadratic constraints on the global sector to abstract the nonlinear activation functions in neural networks. By combining a quadratic constraint approach for the activation function with Lyapunov theory, the interval observer design problem is transformed into a series of quadratic and linear programming feasibility problems to make the interval observer operate with the ability to correctly estimate the system state with estimation errors within acceptable limits. The applicability of the proposed method is verified by simulation of the lateral vehicle control system.
\end{abstract}

\begin{keywords}
Dynamical systems; neural networks; safety monitoring; interval observer 
\end{keywords}

\section{Introduction}

Complex dynamical systems, such as autonomous vehicles and various cyber-physical systems (CPS), have been greatly benefiting from the fast advancement of artificial intelligence (AI) and machine learning (ML)  technologies. Many new theories have been proposed on this basis, such as stable neural network controllers and observers \citep{6748099,287122,7440845}, adaptive neural network controllers \citep{8031467,8681071} and various neural network controllers \citep{HUNT19921083}. Real-time monitoring of these dynamical systems embedded with neural network components is essential to ensure the system's safety. External inputs may have adversarial effects on the normal working state of the system; even with the most advanced neural networks, imperceptible perturbations in the input may lead to an erroneous result \citep{8099500}. In addition, these systems are highly susceptible to erroneous outputs if they are subjected to adversarial attacks, which can have serious safety consequences. Therefore, to ensure the security of dynamical systems embedded in neural networks, it is essential to develop a technique that can monitor the operational state of dynamical systems in real time.

Most current approaches to safety or security verification take the form of offline computation. In general, verification using offline calculation requires a large amount of computational resources due to its high computational complexity. For example, for a type of neural networks with the activation function of rectified linear unit (ReLU), the safety verification problem can be represented as various complex computational problems. Based on polyhedral operations, a geometric computation method is proposed to obtain the exact output set of the neural network using ReLU activation function \citep{8431048,xiang2017reachable}. Based on those results, the methods in \citep{tran2019star,8807491} extended it by proposing a novel approach with the aid of a specific convex set representation called star sets, which greatly improved scalability. 
A mixed-integer linear programming (MILP) method to validate neural networks \wesley{was proposed} in \citep{lomuscio2017approach}.
The work \citep{dutta2019reachability} focuses on neural networks with ReLU activation functions; \wesley{they used} a Taylor-model-based flowpipe construction scheme \wesley{and replaced} the neural network feedback \wesley{with} a polynomial mapping approach for a small fraction of the input to obtain an over-approximated reachable set. In addition, this method can be extended to other activation units after processing by segmental linearization \citep{dutta2018output}. The work \citep{8318388} introduces a simulation-based approach to output reachability estimation for neural networks with common activation functions. This paper \citep{9093970} takes the dynamic system embedded in the feedforward neural network named multilayer perceptrons (MLPs) as the research object, and develops a recursive algorithm with over-approximating the reachable set of the closed-loop system. The security verification of the system is achieved by checking the emptiness of the intersection between the insecure sets and the over-approximation of the reachable sets.

It is worth noting that the open-loop computational structure of these offline methods makes them quite challenging to implement in online settings. On the other hand, offline methods are difficult to detect system security issues in a timely manner, and the system state and parameters may differ from run-time when offline. Therefore, developing an online security monitoring method is very important. For this reason, inspired by observer design theory, we propose an alternative solution to design closed-loop systems for run-time monitoring based on instantaneous measurements of the system. We resort to develop interval observers for dynamical systems with neural networks. The interval observer can estimate the upper and lower bounds of the operating state trajectory of the dynamical system in real-time, which can achieve real-time safety monitoring of the dynamical system \citep{efimov2016design,chebotarev2015interval,bolajraf2011robust,6908993,9279463}. As shown in \citep{9380552}, unlike the general interval observer design approach, the observer gains as well as auxiliary neural networks  have to be designed through a series of optimization problems to ensure that the interval observer can correctly estimate the upper and lower state bounds and a suitable estimation error. The design of the auxiliary neural networks in the interval observer is also necessary to simulate the behavior of the neural network in the original system for better state estimation. The work \citep{9147468} applies interval observers to the safety monitoring of the state of charge (SOC) of lithium-ion batteries. The coupled equivalent circuit-thermal model is adopted in this paper, avoiding the complex structure and calculation caused by the traditional model with electrically and thermally coupled parallel connection of cells. The innovation of the work lies in considering cell heterogeneity as the uncertainty bounding functions and achieving the separation of the state number of interval observers from the number of parallel batteries. 

During the design of interval observers, it is challenging to apply classical control theory, such as Lyapunov theory, for analyzing dynamical systems embedded in neural network components due to the various types of nonlinear activation functions in neural networks. A popular approach is using quadratic constraints (QCs) to abstract the nonlinear activation functions in neural networks. The work \citep{4267699} analyzes the stability of the feedback loop, including neural networks, by replacing the nonlinear and time-varying components of the neural networks with integral quadratic constraints (IQCs). Quadratic constraints are used to abstract the nonlinear activation functions and projection operators in neural network controllers in \citep{9304296}, enabling the reachability analysis of closed-loop systems with neural network controllers. The approach in \citep{9301422} uses quadratic constraints to abstract various properties of the activation function, such as bounded slope, monotonicity, and cross-layer repetition, thus formulating the safety verification problem for neural networks as the SDP feasibility problem. In addition, the characterization of the input-output of neural networks through quadratic constraints allows other issues to be solved, such as the input-output sensitivity analysis of neural networks \citep{8318388}, safety verification and robustness analysis \citep{9301422}, Lipschitz constant estimation of feedforward neural networks \citep{fazlyab2019efficient}, etc.

Synthesizing the previous discussions, the main contributions of this paper are as follows: (1) A global quadratic constraint formulation method for error dynamic systems is discussed; (2) A novel interval observer design method is proposed for the nonlinear dynamical systems with neural networks, and its core contribution is to abstract the nonlinear activation function of neural networks by the quadratic constraints method, so that some control theories applicable to linear systems can also be applied to the nonlinear dynamical systems with neural networks in this paper.

The rest of the paper is organized as follows. In Section II, the system and problem formulation under discussion are presented. The main findings are given in Section III, where the design methods for quadratic constraints on the activation function and auxiliary neural networks are presented, and the design of the interval observer gains $\underline{L}$ \wesley{and} $\overline{L}$ is represented in the form of a series of convex optimization problems. The conclusion obtained is applied to a lateral control system for vehicles in Section IV. In Section V, conclusions and future research directions are given.

\emph{Notations:} In this paper, the notation $\mathbb{R}$ represents real numbers, and $\mathbb{R}_+$ is defined by $\mathbb{R}_+=\{\tau\in\mathbb{R},\tau\geq0\}$. The notation $\mathbb{R}^n$ represents the vector space of all $n$-tuples of real numbers, and $\mathbb{R}^{n\times n}$ is
the space of $n \times n$ matrices with real entries. The superscript ``$T$" denotes the matrix transpose.
The block diagonal matrix is denoted by the symbol $diag\{\cdots\}$. The notation $I_n\in\mathbb{R}^{n\times n}$ denotes the $n$-dimensional identity matrix.  Given a matrix $A\in\mathbb{R}^{m\times n}$, $\Vert A\Vert$ denote its Frobenius norm. 
For two vectors $x_1,x_2\in\mathbb{R}^n$ or matrices $A_1,A_2\in\mathbb{R}^{n\times n}$, 
the relations $x_1<x_2$ and $A_1<A_2$ are interpreted elementwisly. 
The relation $Q\succ0$ ($Q\prec0$) means that $Q\in\mathbb{R}^{n\times n}$ is positive (negative) definite. In addition, $Q>0$ ($Q\geq 0$) means that all elements in this matrix $Q\in\mathbb{R}^{n\times n}$ are positive (nonnegative). $\mathbb{M}_n\in\mathbb{R}^{n\times n}$ is defined as the collection of all $n$-dimensional Metzler matrices.

\section{System Description and Problem Formulation}
\subsection{System Description}

In this paper, we consider a class of learning-enabled nonlinear dynamical systems embedded with neural networks in the following form
\begin{equation}
\begin{aligned}
\left\{
\begin{array}{ll}
\dot{x}(t)=f(x(t),u(t),\Phi(x(t))) \\
y(t)=g(x(t))
\end{array}
\right. ,
\end{aligned}
\label{Formula 1}
\end{equation}
where $x\in \mathbb{R}^{n_x}$, $u(t)\in\mathbb{R}^{n_u}$ and $y\in \mathbb{R}^{n_y}$ are the state vector, input and output of the system, respectively. $f:\mathbb{R}^{n_x+n_u} \to \mathbb{R}^{n_x}$ and $g: \mathbb{R}^{n_x} \to \mathbb{R}^{n_y}$ are nonlinear functions. $\Phi:\mathbb{R}^{n_x} \to \mathbb{R}^{n_x}$ is the neural network component. Without causing ambiguity, we omit the time index $t$ in some of the variables.

Specifically, this work considers a class of dynamical systems embedded with neural networks, which have the form of a Lipschitz nonlinear model as 
\begin{equation}
\begin{aligned}
\mathcal{L}:\left\{
\begin{array}{ll}
\dot{x}=Ax+B_\Phi \Phi(x)+B_u u(t)+f(x) \\
y=Cx
\end{array}
\right. ,
\end{aligned}
\label{Formula 2}
\end{equation}
where $A\in\mathbb{R}^{n_x\times n_x}$, $B_\Phi\in\mathbb{R}^{n_x\times n_{L+1}}_+$, $B_u\in\mathbb{R}^{n_x\times n_{u}}_+$, $C\in\mathbb{R}^{n_y\times n_x}$ and $f(x)$ is a Lipschitz nonlinear function satisfying the following Lipschitz inequality
\begin{equation}
\begin{aligned}
\Vert f(x_1)-f(x_2)\Vert\leq\beta\Vert x_1-x_2\Vert,~\beta>0.
\end{aligned}
\label{Formula 3}
\end{equation}


\begin{Remark}\label{Remark 2}
Many nonlinear systems in the form of $\dot{x}=f(x,u,\Phi(x))$ can be represented in the form of (\ref{Formula 2}) if $f$ is differentiable with respect to $x$ and $u$. The neural network $\Phi(x)$ is the interval component that affects the behavior of the system. For instance, the model (\ref{Formula 2}) represents a state feedback closed-loop system if the neural network $\Phi(x)$ is trained as a feedback controller.
\end{Remark}

For the system (\ref{Formula 2}), there are two sources of uncertainty: the initial values for state $x(0)$ \wesley{and} the instantaneous values of input $u(t)$. We assume that all these uncertainties belong to the known interval as shown in the following assumption.
\begin{Assumption}\label{Assumption 4}
Let $\underline{x}(0)\leq x(0)\leq \overline{x}(0)$ for some known $\underline{x}(0)$ \wesley{and} $\overline{x}(0)\in\mathbb{R}^{n_x}$, and let the known bounded functions $\underline{u}$ \wesley{and} $\overline{u}$ such that
$\underline{u}(t)< u(t)<\overline{u}(t), \forall t\geq0$.
\end{Assumption}

Suppose that the nonlinear function $f(x)$ has the following properties.
\begin{Assumption}\label{Assumption 1}
Suppose there exist functions $\underline{f},\overline{f}:\mathbb{R}^{2n_x}\to\mathbb{R}^{n_x}$ such that
\begin{equation}
\underline{f}(\underline{x},\overline{x})\leq f(x)\leq\overline{f}(\underline{x},\overline{x}) ,
\label{Formula 4}
\end{equation}
holds for any $\underline{x}\leq x\leq\overline{x}$.
\end{Assumption}

\begin{Remark}\label{Remark 6}
Assumptions \ref{Assumption 4} and \ref{Assumption 1} emphasize that the initial state, the input signal and the nonlinear function of the original system, must numerically lie in the interval consisting of the initial state, the input signal and the nonlinear function of the interval observer, respectively. This is to ensure that the interval observer can correctly achieve the interval estimate for the state of the original system, which means $\underline{x}\leq x\leq\overline{x}$, in other words, to ensure that the error system is a positive system.
\end{Remark}

\begin{Assumption}\label{Assumption 2}
Suppose there exist scalars $\underline{a}_1,\overline{a}_1,\underline{a}_2,\overline{a}_2\in\mathbb{R}_+$ and vectors $\underline{\rho},\overline{\rho}\in\mathbb{R}_+^{n_x}$ such that
\begin{equation}
\begin{aligned}
f(x)-\underline{f}(\underline{x},\overline{x})\leq \underline{a}_1(x-\underline{x})+\underline{a}_2(\overline{x}-x)+\underline{\rho} ,\\
\overline{f}(\underline{x},\overline{x})-f(x)\leq \overline{a}_1(x-\underline{x})+\overline{a}_2(\overline{x}-x)+\overline{\rho} ,
\end{aligned}
\notag
\end{equation}
holds for the nonlinear functions $\underline{f}(\underline{x},\overline{x}),f(x),\overline{f}(\underline{x},\overline{x})$ defined in Assumption \ref{Assumption 1}.
\end{Assumption}

\begin{Remark}\label{Remark 1}
Under the Lipschitz condition (\ref{Formula 3}), the estimation of parameters $\underline{a}_1$, $\overline{a}_1$, $\underline{a}_2$, $\overline{a}_2$, $\underline{\rho}$, $\overline{\rho}$ in Assumption \ref{Assumption 2} can be obtained through routine calculation, and the detailed estimation procedures can be found in Lemma 6 of \citep{ZHENG2016167}.
\end{Remark}

An $L$-layer feedforward neural network $\Phi(x):\mathbb{R}^{n_0}\to \mathbb{R}^{n_{L+1}}$ are considered in this work, which is defined by the following recursive equation
\begin{equation}
\mathcal{N}: \left\{
\begin{array}{ll}
\omega^{[0]}=x(t) &
\\
v^{[l]}=W^{[l]}\omega^{[l-1]}+b^{[l]} & l=1,\dots,L
\\
\omega^{[l]}=\phi^{[l]}(v^{[l]})&l=1,\dots,L
\\
\Phi(x)=W^{[L+1]}\omega^{[L]}+b^{[L+1]}
\end{array}
\right. ,
\label{Formula 5}
\end{equation}
where $\omega^{[l]}\in\mathbb{R}^{n_l}$ denotes the output from the $l^{th}$ layer with $n_l$ neurons of the neural network. $v^{[l]}\in\mathbb{R}^{n_l}$ denotes the input to the activation function of the $l^{th}$ layer of the neural network.  $\Phi(x)\in\mathbb{R}^{n_{L+1}}$ \wesley{is} the output of the neural network feedback controller. $W^{[l]}\in\mathbb{R}^{n_l\times n_{l-1}}$ and $b^{[l]}\in\mathbb{R}^{n_l}$ represent the weight matrix and bias vector of the $l^{th}$ layer neural network, respectively. In the $l^{th}$ layer neural network, for vectors $v^{[l]}=[v^{[l]}_1,v^{[l]}_2,\dots,v^{[l]}_{n_l}]^T$, we define $\phi^{[l]}=[\psi^{[l]},\psi^{[l]},\dots,\psi^{[l]}]^T$ to be the series of activation functions and a single activation function is $\psi$, where $\phi^{[l]}(v^{[l]})$ is the action on each element in the vector, i.e.
\begin{equation}
\begin{aligned}
\phi^{[l]}(v^{[l]})=[\psi^{[l]}(v^{[l]}_1),\psi^{[l]}(v^{[l]}_2),\dots,\psi^{[l]}(v^{[l]}_{n_l})]^T
\end{aligned} .
\notag
\end{equation}

Here, the following assumptions about the activation function are given.
\begin{Assumption}\label{Assumption 3}
Suppose that for activation functions $\psi^{[l]},~l=1,\dots,L$, the following properties hold:
\begin{itemize}
    \item Any two scalars $x_1$ and $x_2$ are given, there must be a scalar $\alpha>0$ such that
\begin{equation}
\begin{aligned}
\vert\psi^{[l]}(x_1)-\psi^{[l]}(x_2)\vert\leq\alpha\vert x_1-x_2\vert,~ \forall l=1,\dots,L .
\end{aligned}
\label{Formula 6}
\end{equation}
\item Any two scalars $x_1\leq x_2$ are given, and we have
\begin{equation}
\begin{aligned}
\psi^{[l]}(x_1)\leq\psi^{[l]}(x_2),~\forall l=1,\dots,L .
\end{aligned}
\label{Formula 7}
\end{equation}
\end{itemize}
\end{Assumption}

\begin{Remark}\label{Remark 3}
Assumption \ref{Assumption 3} above applies to the most common activation functions, such as ReLU, sigmoid, tanh, and leaky ReLU.
For condition (\ref{Formula 6}), the $\alpha$ can be obtained by the maximum Lipschitz constant of all $\psi^{[l]}$. 
The condition (\ref{Formula 7}) is satisfied because the common activation functions are monotonically increasing. Without loss of generality, we suppose that the activation functions are the same in each layer.
\end{Remark}

\subsection{Problem Formulation}
Our proposed solution to the problem of safety monitoring of  neural-network-embedded systems is to design a state estimator which is capable of estimating the upper and lower bounds of the state variable $x(t)$ to monitor the operation status of the system in real time. Information about the system $\mathcal{L}$ in the form of (\ref{Formula 2}) being used for the estimator design includes: the system matrices $A$, $B_\Phi$, $B_u$, $C$, the nonlinear function $f$, the neural network $\Phi$, namely the weight matrix $\{W_l\}^{L+1}_{l=1}$, the bias vector $\{b_l\}^{L+1}_{l=1}$, the known bounded functions $\underline{u},\overline{u}$ and the output $y(t)$. The run-time safety estimator design problem can be 
expressed as follows.

\begin{Problem}\label{Problem 1}
For a dynamical system embedded with neural networks in the form of (\ref{Formula 2}), how can we design a run-time safety state estimator such that its \wesley{instantaneous} state estimates\wesley{,} $\underline{x}$ and $\overline{x}$\wesley{,} satisfy $\underline{x}\leq x\leq\overline{x},\forall t\geq0$?
\end{Problem}

To solve the above problem, we consider the development of a run-time safety state estimator in the form of the Luenberger interval observer
\begin{equation}
\left\{
\begin{array}{ll}
\underline{\dot{x}}=(A-\underline{L}C)\underline{x}+\underline{L}y+B_\Phi\underline{\Phi}(\underline{x},\overline{x})+B_u\underline{u}(t)+\underline{f}(\underline{x},\overline{x})\\
\dot{\overline{x}}=(A-\overline{L}C)\overline{x}+\overline{L}y+B_\Phi\overline{\Phi}(\underline{x},\overline{x})+B_u\overline{u}(t)+\overline{f}(\underline{x},\overline{x})
\end{array}
\right. 
\label{Formula 8}
\end{equation}
where the initial state of the interval observer satisfies $\underline{x}(0)\leq x(0)\leq\overline{x}(0)$, $u(t)$ satisfies $\underline{u}(t)< u(t)<\overline{u}(t), \forall t\geq0$, as shown in Assumptions \ref{Assumption 4}, and $\underline{f}(\underline{x},\overline{x})$, $\overline{f}(\underline{x},\overline{x})$ satisfy Assumptions \ref{Assumption 1} and \ref{Assumption 2}. The auxiliary neural networks $\underline{\Phi}(\underline{x},\overline{x})$ \wesley{and} $\overline{\Phi}(\underline{x},\overline{x})$ and the observer gains $\underline{L}$ \wesley{and} $\overline{L}$ are to be determined.

Here, let the error state $\underline{e}=x-\underline{x},\overline{e}=\overline{x}-x$, so that we can obtain the expression for the error dynamical system in the following form
\begin{equation}
\left\{
\begin{array}{ll}\underline{\dot{e}}=(A-\underline{L}C)\underline{e}+B_\Phi\Delta\underline{\Phi}+B_u(u-\underline{u})+f(x)-\underline{f}(\underline{x},\overline{x})\\
\dot{\overline{e}}=(A-\overline{L}C)\overline{e}+B_\Phi\Delta\overline{\Phi}+B_u(\overline{u}-u)+\overline{f}(\underline{x},\overline{x})-f(x)
\end{array}
\right.
\label{Formula 9}
\end{equation}
where $\Delta\underline{\Phi}=\Phi(x)-\underline{\Phi}(\underline{x},\overline{x}),\Delta\overline{\Phi}=\overline{\Phi}(\underline{x},\overline{x})-\Phi(x)$, the initial state of the error system satisfy $\underline{e}(0)\geq0$ \wesley{and} $\overline{e}(0)\geq0$.

We find that the instantaneous estimates of the interval observer satisfy $\underline{x}(t)\leq x(t)\leq\overline{x}(t),\forall t\geq0$ if we can make the state variable $\underline{e}(t)\geq0,~\overline{e}(t)\geq0,~\forall t\geq0$.
Thus, Problem \ref{Problem 1} can be further formulated as follows.

\begin{Problem}\label{Problem 2}
For a dynamical system embedded with neural networks in the form of (\ref{Formula 2}), how can we design the observer gains $\underline{L}$ \wesley{and}  $\overline{L}$\wesley{,} and the auxiliary neural networks $\underline{\Phi}(\underline{x},\overline{x})$ \wesley{and} $\overline{\Phi}(\underline{x},\overline{x})$ in the interval observer (\ref{Formula 8}) such that error state instantaneous estimates $\underline{e}(t)\geq0$ and $\overline{e}(t)\geq0,~\forall t\geq0$ in error dynamical system (\ref{Formula 9})?
\end{Problem}

To solve Problem \ref{Problem 2}, we review the conclusions related to positive systems.

\begin{Definition}\label{Definition 1}
If all elements outside the main diagonal of a matrix $A\in\mathbb{R}^{n\times n}$ are nonnegative, then $A\in\mathbb{M}_n$.
\end{Definition}

\begin{Lemma}\citep{WANG2022}\label{Lemma 3}
The matrix $PA\in\mathbb{M}_n$ still holds if $P$ is a diagonal positive definite matrix and $A\in\mathbb{M}_n$.
\end{Lemma}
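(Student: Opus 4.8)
The plan is to prove this by direct entrywise computation, since the claim reduces to a statement about the off-diagonal entries of the product $PA$. First I would unpack the two hypotheses. By Definition~\ref{Definition 1}, $A \in \mathbb{M}_n$ means precisely that $A_{ij} \geq 0$ for every pair $i \neq j$. Since $P$ is a \emph{diagonal} positive definite matrix, I would write $P = \mathrm{diag}\{p_1,\dots,p_n\}$ and note that positive definiteness of a diagonal matrix forces each diagonal entry to be strictly positive, i.e. $p_i > 0$ for all $i$ (this follows by testing the quadratic form on the standard basis vectors $e_i$, giving $e_i^T P e_i = p_i > 0$).

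The key step is then to observe that because $P$ is diagonal, left-multiplication by $P$ simply scales the $i$-th row of $A$ by $p_i$. Concretely, for any indices $i,j$ the $(i,j)$ entry of the product is
\begin{equation}
(PA)_{ij} = \sum_{k=1}^{n} P_{ik} A_{kj} = p_i A_{ij},
\notag
\end{equation}
since $P_{ik} = 0$ whenever $k \neq i$. Restricting to the off-diagonal case $i \neq j$, I would combine the two facts established above: $p_i > 0$ and $A_{ij} \geq 0$, so that $(PA)_{ij} = p_i A_{ij} \geq 0$. As this holds for every off-diagonal pair, the product $PA$ has all of its off-diagonal entries nonnegative, which by Definition~\ref{Definition 1} means exactly that $PA \in \mathbb{M}_n$, completing the argument.

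I do not anticipate a genuine obstacle here, as the result is a one-line consequence of the diagonal structure of $P$; the only point deserving explicit mention is the justification that positive definiteness of the diagonal matrix $P$ yields strictly positive (rather than merely nonnegative) diagonal entries, which is what guarantees the inequality is preserved rather than possibly collapsing to equality in an uncontrolled way. It is worth noting that the diagonal assumption on $P$ is essential: for a general positive definite $P$ the off-diagonal entries $P_{ik}$ need not be sign-definite, so the sum defining $(PA)_{ij}$ could fail to be nonnegative, and the conclusion would no longer follow from the Metzler property of $A$ alone.
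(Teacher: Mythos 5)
Your proof is correct and complete: the paper itself gives no proof of this lemma (it is cited from the reference), and your entrywise argument --- $(PA)_{ij}=p_iA_{ij}\geq 0$ for $i\neq j$ since $p_i>0$ and $A_{ij}\geq 0$ --- is exactly the standard one-line justification one would expect. Your remarks on why positive definiteness of a diagonal $P$ forces $p_i>0$ and why the diagonal assumption is essential are both accurate and appropriately placed.
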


\begin{Lemma}\citep{efimov2016design}\label{Lemma 1}
Considering a system in the form of $\dot{x}(t)=Ax(t)+d(t)$, for $A\in\mathbb{M}_{n}$, the state $x(t)$ is elementwise nonnegative for all $t\geq0$ if $x(0)\geq0$ and $d(t)\in\mathbb{R}_+^n$, and the system is called cooperative.
\end{Lemma}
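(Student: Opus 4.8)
The plan is to reduce the claim to the single structural fact that the matrix exponential of a Metzler matrix is entrywise nonnegative, and then to read off the conclusion from the variation-of-constants representation of the solution.

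First I would establish that $e^{At}\geq 0$ entrywise for every $t\geq 0$ whenever $A\in\mathbb{M}_{n}$. Since $A$ is Metzler, its off-diagonal entries are already nonnegative, so choosing the scalar $\alpha=\max_{i}(-A_{ii})\geq 0$ makes $M:=A+\alpha I_n$ an entrywise nonnegative matrix. Because $M$ and $-\alpha I_n$ commute, the exponential factorizes as $e^{At}=e^{-\alpha t}\,e^{Mt}$. The scalar factor $e^{-\alpha t}$ is strictly positive, and $e^{Mt}=\sum_{k=0}^{\infty}\frac{t^{k}M^{k}}{k!}$ is a convergent sum of entrywise nonnegative matrices for $t\geq 0$ (products and nonnegative-scalar multiples of entrywise nonnegative matrices remain entrywise nonnegative). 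Hence $e^{At}\geq 0$ entrywise.

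Second, I would invoke the variation-of-constants (Duhamel) formula for the linear system $\dot{x}=Ax+d(t)$, namely
\begin{equation}
x(t)=e^{At}x(0)+\int_{0}^{t}e^{A(t-s)}d(s)\,ds ,
\notag
\end{equation}
which is valid for any piecewise-continuous input $d$. Under the hypotheses $x(0)\geq 0$ and $d(s)\in\mathbb{R}_+^{n}$, together with the entrywise nonnegativity of $e^{At}$ and $e^{A(t-s)}$ from the first step, the leading term is the product of an entrywise nonnegative matrix with a nonnegative vector, and the integrand is entrywise nonnegative for each $s\in[0,t]$. Therefore both terms, and hence their sum $x(t)$, are entrywise nonnegative for every $t\geq 0$.

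An equivalent route, which avoids the integral, is a boundary-invariance argument: at any instant where a coordinate vanishes, $x_i=0$ with $x_j\geq 0$ for $j\neq i$, one has $\dot{x}_i=A_{ii}\cdot 0+\sum_{j\neq i}A_{ij}x_j+d_i\geq 0$, because the off-diagonal entries $A_{ij}$ are nonnegative and $d_i\geq 0$; thus the vector field never points out of the nonnegative orthant, which is consequently forward invariant by Nagumo's subtangentiality theorem. I expect the only genuinely delicate point to be the first step, namely rigorously justifying $e^{At}\geq 0$, since the remainder is a direct substitution; the shift-and-factorize trick $A=M-\alpha I_n$ is precisely what makes that step clean and removes the sign ambiguity arising from the possibly negative diagonal of $A$.
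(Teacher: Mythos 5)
The paper offers no proof of this lemma at all: it is imported verbatim as a known fact about cooperative systems from the cited reference \citep{efimov2016design}, so there is no internal argument to compare yours against. Your proposal is correct and is the standard proof of this classical result: the shift $M:=A+\alpha I_n$ turns the Metzler matrix into an entrywise nonnegative one, the commutation $e^{At}=e^{-\alpha t}e^{Mt}$ gives $e^{At}\geq 0$ for $t\geq 0$, and the variation-of-constants formula then delivers $x(t)\geq 0$ term by term; the Nagumo-type forward-invariance argument you sketch is an equally valid alternative and is closer in spirit to how cooperativity is usually invoked in the positive-systems literature the paper relies on. One cosmetic point: $\alpha=\max_i(-A_{ii})$ need not be nonnegative (e.g.\ when every diagonal entry of $A$ is strictly positive), but this is immaterial to your argument, since the only property used is $e^{-\alpha t}>0$, which holds for any real $\alpha$; alternatively take $\alpha=\max\{0,\max_i(-A_{ii})\}$ to keep the stated inequality literally true.
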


According to Lemma \ref{Lemma 1}, we propose the following proposition as the solution to Problem \ref{Problem 2}, provided that $x(t)$ and $u(t)$ satisfy Assumptions \ref{Assumption 4} and $\underline{f}(\underline{x},\overline{x})$ \wesley{and} $\overline{f}(\underline{x},\overline{x})$ satisfy Assumptions \ref{Assumption 1} and \ref{Assumption 2}.

\begin{Proposition}\label{Proposition 1}
Problem \ref{Problem 2} can be solved if the observer gains, $\underline{L}$ \wesley{and} $\overline{L}$, and the auxiliary neural networks, $\underline{\Phi}(\underline{x},\overline{x})$ \wesley{and} $\overline{\Phi}(\underline{x},\overline{x})$, satisfy the following conditions
\begin{align} \label{proposition_1}
A-\underline{L}C\in\mathbb{M}_{n_x} ,\\
\label{proposition_4}
A-\overline{L}C\in\mathbb{M}_{n_x}  ,\\
\Phi(x)-\underline{\Phi}(\underline{x},\overline{x})\in\mathbb{R}_+^{n_{L+1}}\label{proposition_2} ,\\
\overline{\Phi}(\underline{x},\overline{x})-\Phi(x)\in\mathbb{R}_+^{n_{L+1}} .\label{proposition_3}
\end{align}
\end{Proposition}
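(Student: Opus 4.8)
The plan is to recognize each of the two subsystems of the error dynamics (\ref{Formula 9}) as a cooperative linear system driven by a nonnegative forcing term, so that Lemma~\ref{Lemma 1} applies directly. Concretely, I would rewrite the lower error dynamics as $\underline{\dot{e}}=(A-\underline{L}C)\underline{e}+\underline{d}(t)$ with $\underline{d}(t)=B_\Phi\Delta\underline{\Phi}+B_u(u-\underline{u})+f(x)-\underline{f}(\underline{x},\overline{x})$, and symmetrically treat the upper dynamics with state matrix $A-\overline{L}C$ and forcing $\overline{d}(t)=B_\Phi\Delta\overline{\Phi}+B_u(\overline{u}-u)+\overline{f}(\underline{x},\overline{x})-f(x)$. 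Condition (\ref{proposition_1}) (respectively (\ref{proposition_4})) supplies exactly the Metzler hypothesis $A-\underline{L}C\in\mathbb{M}_{n_x}$ (respectively $A-\overline{L}C\in\mathbb{M}_{n_x}$) required by Lemma~\ref{Lemma 1}, and the initial conditions $\underline{e}(0)\geq0$, $\overline{e}(0)\geq0$ are guaranteed by Assumption~\ref{Assumption 4}. Thus the whole argument reduces to showing that $\underline{d}(t)$ and $\overline{d}(t)$ are elementwise nonnegative for every $t\geq0$.

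I would verify nonnegativity of each forcing term summand by summand, using that $B_\Phi$ and $B_u$ are nonnegative matrices by the structure imposed in (\ref{Formula 2}). For the neural-network contribution, $B_\Phi\Delta\underline{\Phi}\geq0$ follows from $\Delta\underline{\Phi}=\Phi(x)-\underline{\Phi}(\underline{x},\overline{x})\geq0$, which is precisely condition (\ref{proposition_2}); likewise $B_\Phi\Delta\overline{\Phi}\geq0$ from condition (\ref{proposition_3}). For the input term, nonnegativity of $B_u$ combined with the strict interval bound $\underline{u}<u<\overline{u}$ of Assumption~\ref{Assumption 4} gives $B_u(u-\underline{u})\geq0$ and $B_u(\overline{u}-u)\geq0$. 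Finally, the nonlinear bracketing term is handled by Assumption~\ref{Assumption 1}, whose inequalities $\underline{f}(\underline{x},\overline{x})\leq f(x)\leq\overline{f}(\underline{x},\overline{x})$ yield $f(x)-\underline{f}(\underline{x},\overline{x})\geq0$ and $\overline{f}(\underline{x},\overline{x})-f(x)\geq0$. Since a sum of nonnegative vectors stays nonnegative, $\underline{d}(t),\overline{d}(t)\in\mathbb{R}_+^{n_x}$, and Lemma~\ref{Lemma 1} then delivers $\underline{e}(t)\geq0$ and $\overline{e}(t)\geq0$ for all $t\geq0$, which is exactly the requirement of Problem~\ref{Problem 2}.

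The step I expect to need the most care is the neural-network term, and more specifically the logical status of conditions (\ref{proposition_2})--(\ref{proposition_3}). These conditions implicitly presume the inclusion $\underline{x}\leq x\leq\overline{x}$ (otherwise there is no reason for $\underline{\Phi},\overline{\Phi}$ to bracket $\Phi(x)$), yet that inclusion is equivalent to the very nonnegativity $\underline{e},\overline{e}\geq0$ we are trying to establish. To avoid circularity I would either read (\ref{proposition_2})--(\ref{proposition_3}) as properties imposed over the whole admissible argument range, mirroring the way Assumption~\ref{Assumption 1} is stated for all $\underline{x}\leq x\leq\overline{x}$, or, more carefully, run a forward-invariance argument: the inclusion holds at $t=0$ by Assumption~\ref{Assumption 4}, and Lemma~\ref{Lemma 1} shows that the cone $\{\underline{e}\geq0,\ \overline{e}\geq0\}$ cannot be exited, so the bracketing hypothesis needed for nonnegativity of the forcing is maintained along the trajectory. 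I would also stress that the present proposition only asserts \emph{sufficiency} of conditions (\ref{proposition_1})--(\ref{proposition_3}); the explicit construction of $\underline{\Phi},\overline{\Phi}$ meeting (\ref{proposition_2})--(\ref{proposition_3}) and of gains $\underline{L},\overline{L}$ enforcing the Metzler conditions (\ref{proposition_1})--(\ref{proposition_4}) is deferred to the synthesis developed in the following section.
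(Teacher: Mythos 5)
Your proposal is correct and follows essentially the same route as the paper's own proof: both verify that the forcing terms of the error dynamics (\ref{Formula 9}) are elementwise nonnegative via Assumptions \ref{Assumption 4} and \ref{Assumption 1} together with conditions (\ref{proposition_2})--(\ref{proposition_3}) and the nonnegativity of $B_\Phi$ and $B_u$, check $\underline{e}(0),\overline{e}(0)\geq 0$, and then invoke the cooperative-system result of Lemma \ref{Lemma 1} under the Metzler conditions (\ref{proposition_1})--(\ref{proposition_4}). Your additional remark about the apparent circularity of the bracketing hypothesis $\underline{x}\leq x\leq\overline{x}$, resolved by a forward-invariance reading, is a sensible refinement that the paper's terse proof leaves implicit.
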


\begin{proof}
According to Assumption \ref{Assumption 4}, \ref{Assumption 1}, it is clear that $f(x)-\underline{f}(\underline{x},\overline{x})\in\mathbb{R}_+^{n_x}$, $x(0)-\underline{x}(0)\in\mathbb{R}_+^{n_x}$ and $B_u(u-\underline{u})\in\mathbb{R}_+^{n_x}$. Since $B_\Phi(\Phi(x)-\underline{\Phi}(\underline{x},\overline{x}))\in\mathbb{R}_+^{n_x}$ holds and $A-\underline{L}C\in\mathbb{M}_{n_x}$, according to Lemma \ref{Lemma 1}\wesley{,} we can conclude $\underline{e}(t)\geq0,~\forall t\geq0$. The same can be said for $\overline{e}(t)\geq0,\forall t\geq0$. Thus the proof is complete.
\end{proof}

It is worth noting that the conditions in Proposition \ref{Proposition 1} hold only to prove that $\underline{e}(t)\geq0, ~\overline{e}(t)\geq0,\forall t\geq0$. Under the conditions that Proposition \ref{Proposition 1} holds, it is possible that $\lim_{t \to \infty}\underline{e}(t)=\infty$ and $\lim_{t \to \infty}\overline{e}(t)=\infty$ happen. Although the interval observer (\ref{Formula 8}) can provide estimated boundaries of the states of the system (\ref{Formula 2}), the estimation error can be extremely large making the estimates meaningless. Therefore, the concept of practical stability, which is related to the boundedness of the system states as time grows, is introduced.

\begin{Lemma}\citep{ge2004adaptive}\label{Lemma 2}
Considering the system (\ref{Formula 2}), if there exists a continuous Lyapunov function $V(x)$ satisfying $a_1(\parallel x \parallel)\leq V(x)\leq a_2(\parallel x \parallel)$, making $\dot{V}(x)\leq -c_1V(x)+c_2$, where $a_1$ \wesley{and} $a_2$ are class $\mathcal{K}$ functions of the state $x$, and $c_1$ \wesley{and} $c_2$ are positive constants, then the solution $x(t)$ is uniformly bounded and the system is globally practically uniformly exponentially stable.
\end{Lemma}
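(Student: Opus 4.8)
The plan is to reduce the vector problem to a scalar differential inequality and integrate it by a comparison argument. The only analytic ingredient actually supplied by the hypothesis is the single inequality $\dot{V}(x)\leq -c_1 V(x)+c_2$ holding along every trajectory $x(t)$ of system (\ref{Formula 2}). First I would freeze an arbitrary trajectory and set $v(t):=V(x(t))$, a scalar function of time, so that $\dot{v}(t)\leq -c_1 v(t)+c_2$ for all $t\geq 0$. This converts the stability question into a question about a one-dimensional differential inequality with initial value $v(0)=V(x(0))$.

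The central step is the comparison lemma. I would introduce the scalar initial-value problem $\dot{w}=-c_1 w+c_2$ with $w(0)=V(x(0))$, whose solution is explicit,
\[
w(t)=e^{-c_1 t}V(x(0))+\frac{c_2}{c_1}\bigl(1-e^{-c_1 t}\bigr).
\]
Since $v$ obeys the corresponding differential inequality with the same initial datum, the comparison principle gives $v(t)\leq w(t)$, and discarding the nonpositive term $-(c_2/c_1)e^{-c_1 t}$ yields the clean estimate
\[
V(x(t))\leq e^{-c_1 t}V(x(0))+\frac{c_2}{c_1},\quad \forall t\geq 0.
\]

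Next I would reinsert the sandwich bounds $a_1(\Vert x\Vert)\leq V(x)\leq a_2(\Vert x\Vert)$, applying the lower bound on the left and the upper bound to the initial data, to obtain
\[
a_1(\Vert x(t)\Vert)\leq e^{-c_1 t}a_2(\Vert x(0)\Vert)+\frac{c_2}{c_1}.
\]
Because $a_1$ is class $\mathcal{K}$, it is invertible with a continuous increasing inverse, so applying $a_1^{-1}$ produces an explicit uniform bound on $\Vert x(t)\Vert$ by a quantity that decays exponentially (at rate governed by $c_1$) to the fixed value $a_1^{-1}(c_2/c_1)$. This simultaneously establishes uniform boundedness of $x(t)$ and shows the trajectory is driven exponentially into a residual ball whose radius is set by the ratio $c_2/c_1$ independently of the initial time, which is precisely the definition of global practical uniform exponential stability.

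The main obstacle is not computation but regularity and the precise class of the comparison functions. The comparison lemma requires $V$ to be at least continuous and locally Lipschitz, or $\dot{V}$ to be read as an upper Dini derivative, so that the inequality can be integrated legitimately; I would invoke the standard comparison lemma to discharge this rather than rederive it. A second subtlety is that a genuinely global conclusion needs $a_1$ to be radially unbounded (class $\mathcal{K}_\infty$), since otherwise $a_1^{-1}$ may fail to be defined on the whole range of the right-hand side above; accordingly I would either strengthen the hypothesis to $a_1\in\mathcal{K}_\infty$ or restrict the claim to the region on which $a_1^{-1}$ is well defined.
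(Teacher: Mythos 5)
The paper states this lemma without proof, citing \citep{ge2004adaptive}, so there is no in-paper argument to compare against; your proposal supplies the standard and correct proof. The reduction to the scalar inequality $\dot v\leq -c_1v+c_2$, the comparison with $\dot w=-c_1w+c_2$ giving $V(x(t))\leq e^{-c_1t}V(x(0))+c_2/c_1$, and the reinsertion of the class-$\mathcal{K}$ sandwich bounds is exactly the classical route to uniform ultimate boundedness and practical exponential stability, and it is sound. Your two caveats are also well placed: the comparison lemma indeed needs $V(x(t))$ to be absolutely continuous (or $\dot V$ interpreted as an upper Dini derivative), and the conclusion is genuinely \emph{global} only if $a_1$ is radially unbounded (class $\mathcal{K}_\infty$) so that $a_1^{-1}$ is defined on the whole range of $e^{-c_1t}a_2(\Vert x(0)\Vert)+c_2/c_1$; the lemma as printed, with $a_1$ merely class $\mathcal{K}$, implicitly assumes this strengthening, and flagging it is a legitimate refinement rather than a gap in your argument.
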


\section{Observer-Based Safety Monitoring Design}
The aim of this section is to design the interval observer gains $\underline{L}$ and $\overline{L}$, and the auxiliary neural networks $\underline{\Phi}(\underline{x},\overline{x})$ and $\overline{\Phi}(\underline{x},\overline{x})$ that satisfy Proposition \ref{Proposition 1}. In order to minimize the estimation errors, the convergence of the error system also needs to be considered. First, we introduce the design method of auxiliary neural networks $\underline{\Phi}(\underline{x},\overline{x})$ and 
$\overline{\Phi}(\underline{x},\overline{x})$ based on the neural network $\Phi(x)$ defined in (\ref{Formula 5}).

For a given neural network $\Phi$, the $l^{th}$ layer weight matrix is in the following form of
\begin{equation}
\begin{aligned}
W^{[l]}=[w^{[l]}_{i,j}]=
\begin{bmatrix}
w^{[l]}_{1,1}&w^{[l]}_{1,2}&\cdots&w^{[l]}_{1,n_{l-1}}\\
w^{[l]}_{2,1}&w^{[l]}_{2,2}&\cdots&w^{[l]}_{2,n_{l-1}}\\
\vdots&\vdots&\ddots&\vdots\\
w^{[l]}_{n_l,1}&w^{[l]}_{n_l,2}&\cdots&w^{[l]}_{n_l,n_{l-1}}\\
\end{bmatrix}\\
\end{aligned} ,
\label{Formula 10}
\end{equation}
where $w^{[l]}_{i,j}$ expresses the element in $i^{th}$ row and $j^{th}$ column. Two auxiliary weight matrices are defined as follows 
\begin{equation}
\begin{aligned}
\underline{W}^{[l]}=[\underline{w}^{[l]}_{i,j}],~\underline{w}^{[l]}_{i,j}=
\left\{
\begin{array}{ll}
w^{[l]}_{i,j}&,w^{[l]}_{i,j}<0\\
0&,w^{[l]}_{i,j}\geq0
\end{array}
\right. ,\\
\overline{W}^{[l]}=[\overline{w}^{[l]}_{i,j}],~\overline{w}^{[l]}_{i,j}=
\left\{
\begin{array}{ll}
w^{[l]}_{i,j}&,w^{[l]}_{i,j}\geq0\\
0&,w^{[l]}_{i,j}<0
\end{array}
\right. .
\end{aligned} 
\label{Formula 11}
\end{equation}

Obviously, we can get $W^{[l]}=\underline{W}^{[l]}+\overline{W}^{[l]}$. Then two auxiliary neural networks  $\underline{\Phi}(\underline{x},\overline{x}):\mathbb{R}^{2n_0}\to \mathbb{R}^{n_{L+1}}$ and $\overline{\Phi}(\underline{x},\overline{x}):\mathbb{R}^{2n_0}\to \mathbb{R}^{n_{L+1}}$ are constructed with inputs $\underline{x},\overline{x}\in\mathbb{R}^{n_0}$ in the expression of
\begin{align}
\underline{\mathcal{N}}: 
\left\{
\begin{array}{ll}
\underline{\omega}^{[0]}=\underline{x}(t)
\\
\underline{v}^{[l]}=\underline{W}^{[l]}\overline{\omega}^{[l-1]}+\overline{W}^{[l]}\underline{\omega}^{[l-1]}+b^{[l]} 
\\
\underline{\omega}^{[l]}=\phi^{[l]}(\underline{v}^{[l]}) 
\\
\underline{\Phi}(\underline{x},\overline{x})=\underline{W}^{[L+1]}\overline{\omega}^{[L]}+\overline{W}^{[L+1]}\underline{\omega}^{[L]}+b^{[L+1]}
\end{array}
\right. ,
\label{Formula 12}
\\
\overline{\mathcal{N}}: \left\{
\begin{array}{ll}
\overline{\omega}^{[0]}=\overline{x}(t) 
\\
\overline{v}^{[l]}=\underline{W}^{[l]}\underline{\omega}^{[l-1]}+\overline{W}^{[l]}\overline{\omega}^{[l-1]}+b^{[l]}  
\\
\overline{\omega}^{[l]}=\phi^{[l]}(\overline{v}^{[l]})  
\\
\overline{\Phi}(\underline{x},\overline{x})=\underline{W}^{[L+1]}\underline{\omega}^{[L]}+\overline{W}^{[L+1]}\overline{\omega}^{[L]}+b^{[L+1]}
\end{array}
\right. ,
\label{Formula 12_a}
\end{align}
where $l=1,\dots,L$.

In the case $\underline{x}\leq x\leq\overline{x}$, the following lemma proves that the auxiliary neural networks $\underline{\Phi}(\underline{x},\overline{x})$ and $\overline{\Phi}(\underline{x},\overline{x})$ identified by (\ref{Formula 12}) and (\ref{Formula 12_a}) can satisfy (\ref{proposition_2}) and (\ref{proposition_3}) in Proposition \ref{Proposition 1}, i.e. $\Phi(x)-\underline{\Phi}(\underline{x},\overline{x})\in\mathbb{R}_+^{n_{L+1}},\overline{\Phi}(\underline{x},\overline{x})-\Phi(x)\in\mathbb{R}_+^{n_{L+1}}$.

\begin{Lemma}\citep{9380552}\label{Lemma 4}
Considering the neural network $\Phi:\mathbb{R}^{n_0}\to\mathbb{R}^{n_{L+1}}$ and auxiliary neural networks  $\underline{\Phi}(\underline{x},\overline{x}):\mathbb{R}^{2n_0}\to \mathbb{R}^{n_{L+1}}$, $\overline{\Phi}(\underline{x},\overline{x}):\mathbb{R}^{2n_0}\to \mathbb{R}^{n_{L+1}}$ described by (\ref{Formula 12}) and (\ref{Formula 12_a}), the following condition
\begin{equation}
\begin{aligned}
\begin{bmatrix}
\Phi(x)-\underline{\Phi}(\underline{x},\overline{x})\\\overline{\Phi}(\underline{x},\overline{x})-\Phi(x)
\end{bmatrix}
\in\mathbb{R}_+^{2n_{L+1}}
\end{aligned} ,\label{Formula 13}
\end{equation}
holds for any $\underline{x}\leq x\leq\overline{x}$.
\end{Lemma}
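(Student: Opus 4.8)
The plan is to prove the two-sided enclosure $\underline{\Phi}(\underline{x},\overline{x})\leq\Phi(x)\leq\overline{\Phi}(\underline{x},\overline{x})$ by propagating interval bounds through the network layer by layer. Concretely, I would establish by induction on $l$ that the hidden activations satisfy $\underline{\omega}^{[l]}\leq\omega^{[l]}\leq\overline{\omega}^{[l]}$ for $l=0,1,\dots,L$, and then treat the affine output layer separately. The base case $l=0$ is immediate: by hypothesis $\underline{x}\leq x\leq\overline{x}$, which is exactly $\underline{\omega}^{[0]}\leq\omega^{[0]}\leq\overline{\omega}^{[0]}$.

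For the inductive step, the key structural fact is the sign decomposition $W^{[l]}=\underline{W}^{[l]}+\overline{W}^{[l]}$ from (\ref{Formula 11}), where $\overline{W}^{[l]}\geq0$ collects the nonnegative entries and $\underline{W}^{[l]}\leq0$ the negative entries. First I would establish the pre-activation ordering $\underline{v}^{[l]}\leq v^{[l]}\leq\overline{v}^{[l]}$. Writing $v^{[l]}=\underline{W}^{[l]}\omega^{[l-1]}+\overline{W}^{[l]}\omega^{[l-1]}+b^{[l]}$ and subtracting the definitions of $\underline{v}^{[l]}$ and $\overline{v}^{[l]}$ in (\ref{Formula 12}) and (\ref{Formula 12_a}) gives
\begin{equation}
v^{[l]}-\underline{v}^{[l]}=\underline{W}^{[l]}(\omega^{[l-1]}-\overline{\omega}^{[l-1]})+\overline{W}^{[l]}(\omega^{[l-1]}-\underline{\omega}^{[l-1]}),\notag
\end{equation}
\begin{equation}
\overline{v}^{[l]}-v^{[l]}=\underline{W}^{[l]}(\underline{\omega}^{[l-1]}-\omega^{[l-1]})+\overline{W}^{[l]}(\overline{\omega}^{[l-1]}-\omega^{[l-1]}).\notag
\end{equation}
Each term on the right is a product of a sign-definite matrix with a sign-definite vector supplied by the induction hypothesis (a nonpositive matrix times a nonpositive vector, or a nonnegative matrix times a nonnegative vector), hence elementwise nonnegative; this yields $\underline{v}^{[l]}\leq v^{[l]}\leq\overline{v}^{[l]}$. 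Applying the activation, the monotonicity property (\ref{Formula 7}) in Assumption \ref{Assumption 3} makes $\phi^{[l]}$ elementwise nondecreasing, so the ordering is preserved: $\underline{\omega}^{[l]}=\phi^{[l]}(\underline{v}^{[l]})\leq\phi^{[l]}(v^{[l]})=\omega^{[l]}\leq\phi^{[l]}(\overline{v}^{[l]})=\overline{\omega}^{[l]}$, closing the induction.

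Finally, the output layer is affine, so repeating the same sign-decomposition computation with $W^{[L+1]}=\underline{W}^{[L+1]}+\overline{W}^{[L+1]}$ at level $L$ gives $\Phi(x)-\underline{\Phi}(\underline{x},\overline{x})=\underline{W}^{[L+1]}(\omega^{[L]}-\overline{\omega}^{[L]})+\overline{W}^{[L+1]}(\omega^{[L]}-\underline{\omega}^{[L]})\geq0$ and analogously $\overline{\Phi}(\underline{x},\overline{x})-\Phi(x)\geq0$, which is precisely (\ref{Formula 13}). I do not expect a serious obstacle, since every step is a direct sign-tracking argument; the only point demanding care is carrying the two inequalities simultaneously, because the lower pre-activation $\underline{v}^{[l]}$ deliberately pairs the negative weights with the \emph{upper} bound $\overline{\omega}^{[l-1]}$ and the positive weights with the \emph{lower} bound $\underline{\omega}^{[l-1]}$ (and symmetrically for $\overline{v}^{[l]}$). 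It is exactly this crossed pairing, combined with the monotone sectoring of the activation, that converts the pre-activation enclosure into the post-activation enclosure at each layer.
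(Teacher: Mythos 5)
Your proof is correct: the layer-by-layer induction on $\underline{\omega}^{[l]}\leq\omega^{[l]}\leq\overline{\omega}^{[l]}$, using the sign decomposition $W^{[l]}=\underline{W}^{[l]}+\overline{W}^{[l]}$ with the crossed pairing of bounds and the monotonicity property (\ref{Formula 7}) to push the enclosure through each activation, is exactly the standard argument for this result. Note that the paper itself gives no proof of Lemma \ref{Lemma 4} (it is quoted from \citep{9380552}), and your argument matches the one used in that cited source, so there is nothing further to reconcile.
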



The above constructed neural networks and Lemma \ref{Lemma 4} provide a method for designing the auxiliary neural networks $\underline{\Phi}(\underline{x},\overline{x})$ \wesley{and} $\overline{\Phi}(\underline{x},\overline{x})$ that meet the conditions in Proposition \ref{Proposition 1}.   Next, we need to design the observer gains $\underline{L}$ \wesley{and} $\overline{L}$ such that (\ref{proposition_1}) \wesley{and} (\ref{proposition_4}) in Proposition \ref{Proposition 1} hold and the estimation error is within an acceptable range. The nonlinear activation function makes it difficult to incorporate the above results into the convex optimization framework which is usually used for observer gain design. Inspired by the approach proposed in the literature \citep{9388885}, we can abstract the activation function by quadratic constraints.

\subsection{Quadratic Constraints on the Activation Functions}
Considering the error dynamical system (\ref{Formula 9}) and in connection with the definition of the auxiliary neural networks (\ref{Formula 12}) and (\ref{Formula 12_a}), the following results can be obtained 

\begin{align}
\Phi-\underline{\Phi}
&=W^{[L+1]}\omega^{[L]}+b^{[L+1]}-(\underline{W}^{[L+1]}\overline{\omega}^{[L]}+\overline{W}^{[L+1]}\underline{\omega}^{[L]}+b^{[L+1]})\notag\\
&=(\underline{W}^{[L+1]}+\overline{W}^{[L+1]})\omega^{[L]}-(\underline{W}^{[L+1]}\overline{\omega}^{[L]}+\overline{W}^{[L+1]}\underline{\omega}^{[L]}) \notag\\
&=\overline{W}^{[L+1]}\underline{\xi}^{[L]}-\underline{W}^{[L+1]}\overline{\xi}^{[L]} , \notag
\\
\overline{\Phi}-\Phi
&=\underline{W}^{[L+1]}\underline{\omega}^{[L]}+\overline{W}^{[L+1]}\overline{\omega}^{[L]}+b^{[L+1]}-(W^{[L+1]}\omega^{[L]}+b^{[L+1]})\notag\\
&=\underline{W}^{[L+1]}\underline{\omega}^{[L]}+\overline{W}^{[L+1]}\overline{\omega}^{[L]}-(\underline{W}^{[L+1]}+\overline{W}^{[L+1]})\omega^{[L]} \notag\\
&=-\underline{W}^{[L+1]}\underline{\xi}^{[L]}+\overline{W}^{[L+1]}\overline{\xi}^{[L]} ,\notag\\
v_\Phi-\underline{v}_\Phi
&=\begin{bmatrix}
v^{[1]}-\underline{v}^{[1]}\\
v^{[2]}-\underline{v}^{[2]}\\ 
\vdots\\
v^{[L]}-\underline{v}^{[L]}\\ 
\end{bmatrix}  
=\begin{bmatrix}
\overline{W}^{[1]}(x-\underline{x})-\underline{W}^{[1]}(\overline{x}-x)\\
\overline{W}^{[2]}\underline{\xi}^{[1]}-\underline{W}^{[2]}\overline{\xi}^{[1]}\\ 
\vdots\\
\overline{W}^{[L]}\underline{\xi}^{[L-1]}-\underline{W}^{[L]}\overline{\xi}^{[L-1]}  \notag
\end{bmatrix} ,
\\
\overline{v}_\Phi-v_\Phi
&=\begin{bmatrix}
\overline{v}^{[1]} - v^{[1]}\\ 
\overline{v}^{[2]} - v^{[2]}\\ 
\vdots\\
\overline{v}^{[L]} - v^{[L]}\\
\end{bmatrix} 
=\begin{bmatrix}
-\underline{W}^{[1]}(x-\underline{x})+\overline{W}^{[1]}(\overline{x}-x)\\
-\underline{W}^{[2]}\underline{\xi}^{[1]}+\overline{W}^{[2]}\overline{\xi}^{[1]}\\ 
\vdots\\
-\underline{W}^{[L]}\underline{\xi}^{[L-1]}+\overline{W}^{[L]}\overline{\xi}^{[L-1]}
\end{bmatrix} , \notag
\end{align}
where $\underline{\xi}^{[l]} = \omega^{[l]}-\underline{\omega}^{[l]}$ and $\overline{\xi}^{[l]} = \overline{\omega}^{[l]}-{\omega}^{[l]}$.

Furthermore, the following relationship is readily available
\begin{equation}
\begin{aligned}
\begin{bmatrix}
\Phi-\underline{\Phi}\\ \overline{\Phi}-\Phi\\v_\Phi-\underline{v}_\Phi\\ \overline{v}_\Phi-v_\Phi
\end{bmatrix}
=N\begin{bmatrix}
x-\underline{x}\\ \overline{x}-x\\ \omega_\Phi-\underline{\omega}_\Phi\\ \overline{\omega}_\Phi-\omega_\Phi
\end{bmatrix}
\end{aligned}
\label{Formula 16} ,
\end{equation}
where $N$ is defined in Table 1, and
\begin{table}
\caption{Definition of $N$ in (\ref{Formula 16})}
\centering
\captionsetup{justification=centering}
\begin{align}
&N
=\left[\begin{array}{c;{2pt/2pt}c;{2pt/2pt}c;{2pt/2pt}c}
\overline{N}_{\Phi x} & \underline{N}_{\Phi x} & \overline{N}_{\Phi \omega} & \underline{N}_{\Phi \omega}\\
 \hdashline[2pt/2pt]
\underline{N}_{\Phi x} & \overline{N}_{\Phi x} &\underline{N}_{\Phi \omega} & \overline{N}_{\Phi \omega}\\
 \hdashline[2pt/2pt]
\overline{N}_{vx} & \underline{N}_{vx} & \overline{N}_{v\omega} & \underline{N}_{v\omega}\\
 \hdashline[2pt/2pt]
\underline{N}_{vx} & \overline{N}_{vx} &\underline{N}_{v\omega} & \overline{N}_{v\omega}\\
\end{array}\right] \notag\\
&=\left[\begin{array}{c;{2pt/2pt}c;{2pt/2pt}cccc;{2pt/2pt}cccc}
0 & 0 &0 &0 & \cdots &\overline{W}^{[L+1]}&0 &0 & \cdots &-\underline{W}^{[L+1]} \\
 \hdashline[2pt/2pt]
0 & 0 &0 &0 & \cdots &-\underline{W}^{[L+1]}&0 &0 & \cdots &\overline{W}^{[L+1]} \\
 \hdashline[2pt/2pt]
\overline{W}^{[1]} & -\underline{W}^{[1]}    &0&\cdots &0&0     &0&\cdots&0&0\\
0 &0   &\overline{W}^{[2]}&\cdots &0&0     &-\underline{W}^{[2]}&\cdots &0&0\\
\vdots &\vdots   &\vdots&\ddots &\vdots&\vdots    &\vdots&\ddots &\vdots&\vdots\\
0 &0   &0&\cdots &\overline{W}^{[L]}&0   &0&\cdots &-\underline{W}^{[L]}&0\\
 \hdashline[2pt/2pt]
-\underline{W}^{[1]} & \overline{W}^{[1]}    &0&\cdots &0&0     &0&\cdots&0&0\\
0 &0   &-\underline{W}^{[2]}&\cdots &0&0     &\overline{W}^{[2]}&\cdots &0&0\\
\vdots &\vdots   &\vdots&\ddots &\vdots&\vdots    &\vdots&\ddots &\vdots&\vdots\\
0 &0   &0&\cdots &-\underline{W}^{[L]}&0   &0&\cdots &\overline{W}^{[L]}&0\\
\end{array}\right]
\notag
\end{align}
\hrule
\end{table}
\begin{equation}
\begin{aligned}
&\underline{\omega}_\Phi(t)=
\begin{bmatrix}
\underline{\omega}^{[1]}(t)\\ \vdots\\\underline{\omega}^{[L]}(t)
\end{bmatrix},
\overline{\omega}_\Phi(t)=
\begin{bmatrix}
\overline{\omega}^{[1]}(t)\\ \vdots\\ \overline{\omega}^{[L]}(t)
\end{bmatrix},
\\
&\phi(\underline{v}_\Phi)=
\begin{bmatrix}
\phi^{[1]}(\underline{v}^{[1]})\\ \vdots\\\phi^{[L]}(\underline{v}^{[L]})
\end{bmatrix} \in\mathbb{R}^{n_\Phi},
\phi(\overline{v}_\Phi)=
\begin{bmatrix}
\phi^{[1]}(\overline{v}^{[1]})\\ \vdots\\\phi^{[L]}(\overline{v}^{[L]})
\end{bmatrix}
\in\mathbb{R}^{n_\Phi},
\end{aligned}
\notag
\end{equation}
in which $n_\Phi=n_1+n_2+ \dots +n_L$.

Abstracting the activation function based on quadratic constraints (QCs) is an essential approach in the following interval observer design. Let us first define an offset local sector.

\begin{Definition}\citep{9388885}\label{Definition 2}
Suppose that given $\alpha,\beta,\hat{\underline{v}},\hat{\overline{v}},v^\ast\in\mathbb{R}$, where $\alpha\leq\beta,~\hat{\underline{v}}\leq v^\ast\leq\hat{\overline{v}}$. The activation function $\psi:\mathbb{R}\to\mathbb{R}$ satisfies the offset local sector $[\alpha,\beta]$ around the given point $(v^\ast,\psi(v^\ast))$ if
\begin{equation}
\begin{aligned}
(\Delta\psi(v)-\alpha\Delta v)(\beta\Delta v-\Delta\psi(v))\geq0,~\forall v\in[\hat{\underline{v}},\hat{\overline{v}}]
\end{aligned} ,
\label{Formula 18}
\end{equation}
where $\Delta v=v-v^\ast$ and $\Delta\psi(v)=\psi(v)-\psi(v^\ast)$.
\end{Definition}

If the function $\psi$ satisfies a local offset sector $[\alpha,\beta]$ centred at any point $(v^\ast,\psi(v^\ast))$, it means that the function $\psi$ satisfies a global offset sector $[\alpha,\beta]$. As shown in Figure \ref{FIG:1}, function $\psi(v)=tanh(v)$ satisfies the global sector bound around the point $(1,\psi(1))$ with $[\alpha,\beta]=[0,1]$. For global sector constraints, the value of $\alpha,\beta$ are independent of the chosen reference point $(v^\ast,\psi(v^\ast))$ and are only related to the chosen activation function. When the input to the function is restricted to $v\in[\hat{\underline{v}},\hat{\overline{v}}]$, the stricter offset local sector constraint will be satisfied. As shown in Figure \ref{FIG:2}, function $\psi(v)=tanh(v)$ satisfies the offset local sector bound around the point $(0,\psi(0))$ with $[\alpha,\beta]=[0.48,1]$, where $v\in[-2,2]$.

\begin{figure}[htb]
\centering
\subfloat[Global sector constraint on function $\psi(v)=tanh(v)$.\label{FIG:1}]{\includegraphics[scale=0.45]{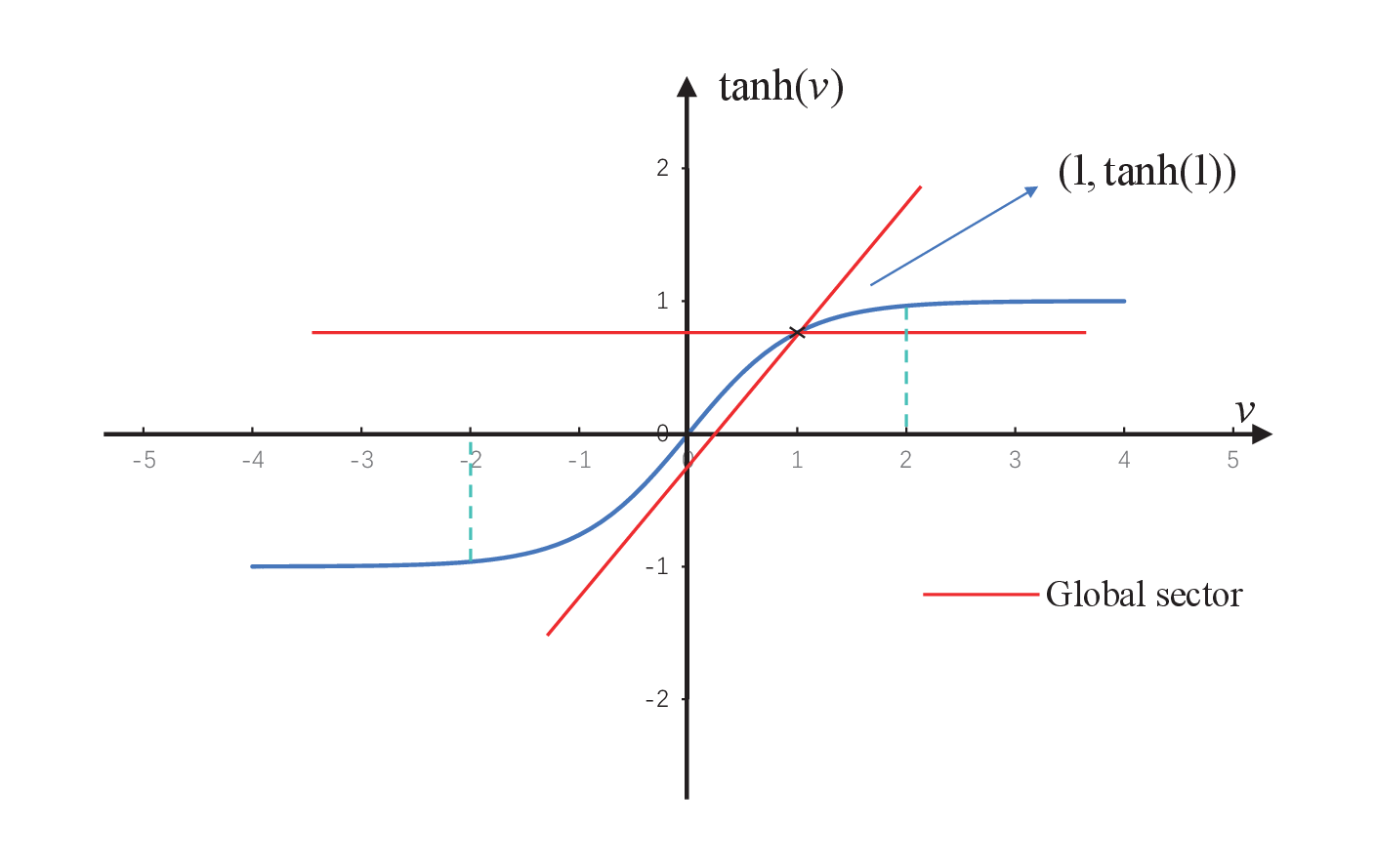}}\hspace{5pt}
\subfloat[Offset local sector constraint on function $\psi(v)=tanh(v)$.\label{FIG:2}]
{\includegraphics[scale=0.45]{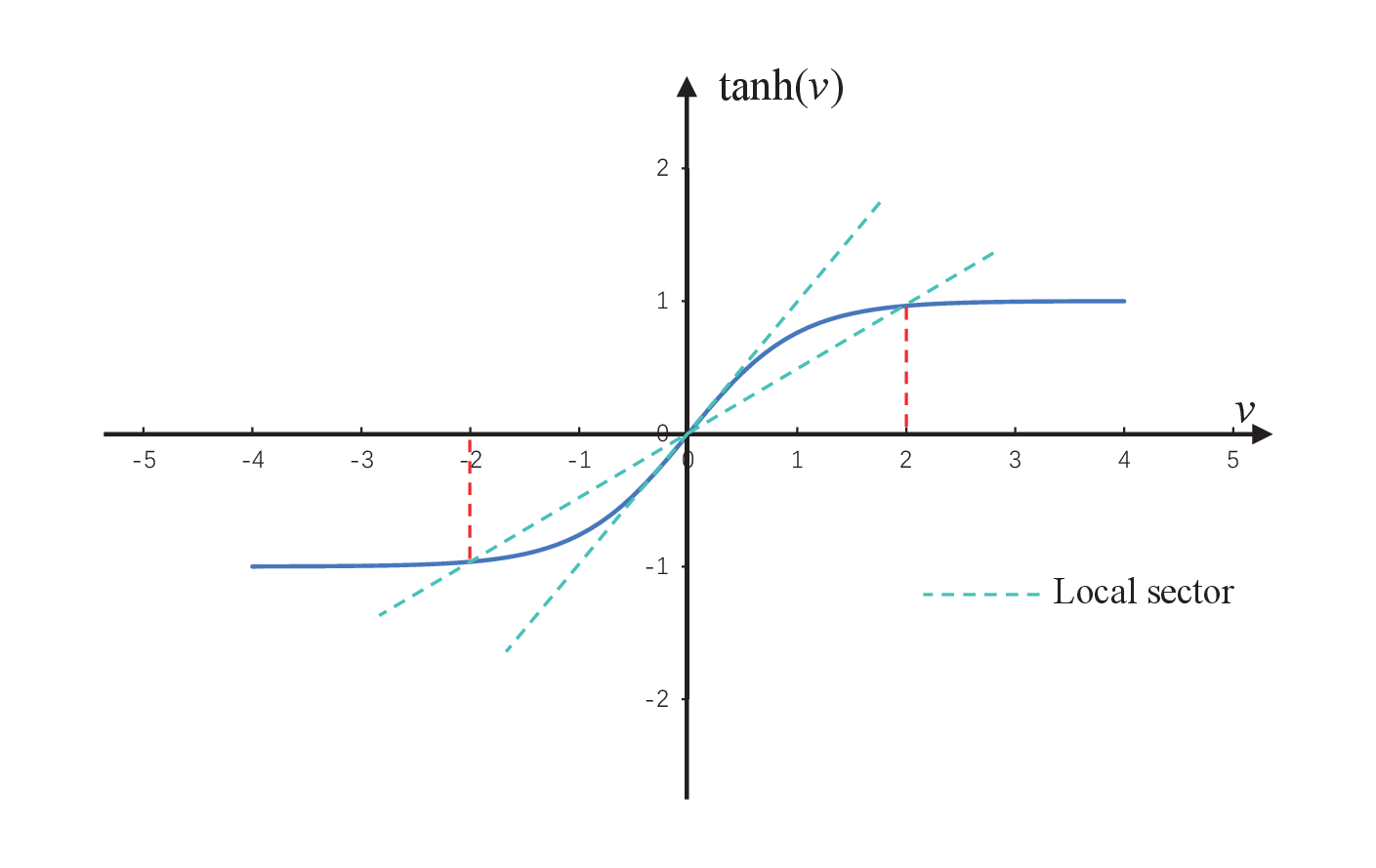}}
\caption{Two types of quadratic constraints.} \label{1}
\end{figure}

We can then convert the expression (\ref{Formula 18}) of the local offset sector into the following form
\begin{equation}
\begin{aligned}
\alpha\leq \frac{\psi(v)-\psi(v^\ast)}{v-v^\ast}\leq \beta,~
\forall v\in[\hat{\underline{v}},\hat{\overline{v}}].
\end{aligned}
\label{Formula 19}
\end{equation}

According to (\ref{Formula 19}), we can further interpret Definition \ref{Definition 2} as follows. For a function $\psi$ satisfying a local offset sector $[\alpha,\beta]$ around the point $(v^\ast,\psi(v^\ast))$, considering $\forall v\in[\hat{\underline{v}},\hat{\overline{v}}]$, the slope of the line connecting any point on the function $\psi$ to the center point $(v^\ast,\psi(v^\ast))$ is between $[\alpha,\beta]$.
The local sector constraint for a single activation function $\psi:\mathbb{R}\to\mathbb{R}$ is given above. Next, we consider the local sector constraint problem for a function formed by concatenating multiple activation functions. Considering the activation function of a series connection $\phi^{n_\Phi}:\mathbb{R}^{n_\Phi}\to\mathbb{R}^{n_\Phi}$, given $\alpha_\Phi,\beta_\Phi,\hat{\underline{v}}_\Phi,\hat{\overline{v}}_\Phi,v^\ast_\Phi \in\mathbb{R}^{n_\Phi}$, satisfying $\alpha_\Phi\leq\beta_\Phi$, $\hat{\underline{v}}_\Phi\leq v^\ast_\Phi\leq\hat{\overline{v}}_\Phi$, for the $i^{th}$ input $v_{\Phi,i}\in[\hat{\underline{v}}_{\Phi,i},\hat{\overline{v}}_{\Phi,i}],~i=1,\dots,n_\Phi$ of the function $\phi^{n_\Phi}$, we can obtain the offset sector $[\alpha_{\Phi,i},\beta_{\Phi,i}]$ either analytically or numerically. $\alpha_\Phi,\beta_\Phi$ can be obtained by stacking these local sectors, and the quadratic constraints considering the concatenation of activation functions $\phi^{n_\Phi}$ is given below.

\begin{Lemma}\citep{9388885}\label{Lemma 5}
Given $\alpha_\Phi,\beta_\Phi,\hat{\underline{v}}_\Phi,\hat{\overline{v}}_\Phi,v^\ast_\Phi \in\mathbb{R}^{n_\Phi}$, satisfying $\alpha_\Phi\leq\beta_\Phi$, $\hat{\underline{v}}_\Phi\leq v^\ast_\Phi\leq\hat{\overline{v}}_\Phi$ and $\omega^\ast_\Phi=\phi(v^\ast_\Phi)$. Suppose that the function $\phi^{n_\Phi}:\mathbb{R}^{n_\Phi}\to\mathbb{R}^{n_\Phi}$ satisfies the offset local sector $[\alpha_\Phi,\beta_\Phi]$ around the point $(v^\ast_\Phi,\psi(v^\ast_\Phi))$. Given $\lambda\geq 0$ where $\lambda\in\mathbb{R}^{n_\Phi}$, we have
\begin{equation}
\begin{aligned}
\begin{bmatrix}
v_\Phi-v^\ast_\Phi\\\omega_\Phi-\omega_\Phi^\ast
\end{bmatrix}^T
\tilde{\Psi}^T_\Phi \tilde{M}_\Phi(\lambda)\tilde{\Psi}_\Phi
\begin{bmatrix}
v_\Phi-v^\ast_\Phi\\\omega_\Phi-\omega_\Phi^\ast
\end{bmatrix}\geq 0
\end{aligned} ,\label{Formula 20}
\end{equation}
where $\omega_{\Phi}=\phi^{n_\Phi}(v)$ and
\begin{equation}
\begin{aligned}
\tilde{\Psi}_\Phi&=
\begin{bmatrix}
diag(\beta_\Phi)&-I_{n_\Phi}\\-diag(\alpha_\Phi)&I_{n_\Phi}
\end{bmatrix},\\
\tilde{M}_\Phi(\lambda)&=
\begin{bmatrix}
0_{n_\Phi}&diag(\lambda)\\diag(\lambda)&0_{n_\Phi}
\end{bmatrix}.
\end{aligned}\label{Formula 21} \notag
\end{equation}
\end{Lemma}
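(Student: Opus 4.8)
The plan is to reduce the vector quadratic inequality \eqref{Formula 20} to a nonnegatively weighted sum of the scalar sector conditions that are already guaranteed by Definition \ref{Definition 2}, so that the whole argument becomes an application of the S-procedure idea in diagonal form. First I would exploit the componentwise structure of the concatenated activation: since $\phi^{n_\Phi}$ applies the single scalar activation $\psi$ to each coordinate, the $i$-th entry of $\omega_\Phi=\phi^{n_\Phi}(v_\Phi)$ is $\omega_{\Phi,i}=\psi(v_{\Phi,i})$ and likewise $\omega^\ast_{\Phi,i}=\psi(v^\ast_{\Phi,i})$. Writing $\Delta v_i=v_{\Phi,i}-v^\ast_{\Phi,i}$ and $\Delta\omega_i=\omega_{\Phi,i}-\omega^\ast_{\Phi,i}$, the hypothesis that $\phi^{n_\Phi}$ satisfies the offset local sector $[\alpha_\Phi,\beta_\Phi]$ means, applying Definition \ref{Definition 2} in each coordinate, that
\[
\left(\beta_{\Phi,i}\Delta v_i-\Delta\omega_i\right)\left(\Delta\omega_i-\alpha_{\Phi,i}\Delta v_i\right)\geq 0,\qquad i=1,\dots,n_\Phi .
\]

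Next I would evaluate $\tilde{\Psi}_\Phi\xi$ where $\xi=[(v_\Phi-v^\ast_\Phi)^T,\,(\omega_\Phi-\omega^\ast_\Phi)^T]^T$. Using the block form of $\tilde{\Psi}_\Phi$, this produces a stacked vector whose top block is $p=diag(\beta_\Phi)(v_\Phi-v^\ast_\Phi)-(\omega_\Phi-\omega^\ast_\Phi)$ and whose bottom block is $q=(\omega_\Phi-\omega^\ast_\Phi)-diag(\alpha_\Phi)(v_\Phi-v^\ast_\Phi)$, so that $p_i=\beta_{\Phi,i}\Delta v_i-\Delta\omega_i$ and $q_i=\Delta\omega_i-\alpha_{\Phi,i}\Delta v_i$. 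Then I would use the purely off-diagonal structure of $\tilde{M}_\Phi(\lambda)$: because its diagonal blocks vanish and its off-diagonal blocks both equal $diag(\lambda)$, the quadratic form collapses to
\[
(\tilde{\Psi}_\Phi\xi)^T\tilde{M}_\Phi(\lambda)\tilde{\Psi}_\Phi\xi
= 2\,p^T diag(\lambda)\,q
= 2\sum_{i=1}^{n_\Phi}\lambda_i\,p_i q_i .
\]
Substituting the expressions for $p_i$ and $q_i$ shows that each summand equals $\lambda_i$ times the scalar sector product from the first step.

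Finally I would conclude by sign-counting: each scalar sector product $p_i q_i$ is nonnegative by the coordinatewise sector condition, and each multiplier satisfies $\lambda_i\geq 0$ by hypothesis, so every term $\lambda_i p_i q_i\geq 0$ and the full sum is nonnegative, which is exactly \eqref{Formula 20}. The argument is essentially bookkeeping, and the only step demanding care is the matrix algebra in the middle display: one must verify that the off-diagonal block structure of $\tilde{M}_\Phi(\lambda)$ yields precisely the cross product $p_i q_i$ weighted by $\lambda_i$ (rather than squared terms $p_i^2$ or $q_i^2$), i.e. that the diagonal multiplier $\lambda_i$ attaches correctly to each individual sector inequality. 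Once that identification is pinned down there is no genuine analytic obstacle, and the nonnegativity of a sum of products of nonnegative quantities finishes the proof.
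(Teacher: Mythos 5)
Your proof is correct and is essentially the argument behind this result: the paper itself states Lemma 5 by citation without reproducing a proof, but the expansion you carry out --- applying the scalar offset-sector condition of Definition 2 coordinatewise, computing $\tilde{\Psi}_\Phi\xi=[p^T,q^T]^T$ with $p_i=\beta_{\Phi,i}\Delta v_i-\Delta\omega_i$, $q_i=\Delta\omega_i-\alpha_{\Phi,i}\Delta v_i$, and using the off-diagonal structure of $\tilde{M}_\Phi(\lambda)$ to collapse the quadratic form to $2\sum_i\lambda_i p_iq_i\geq 0$ --- is exactly the same weighted-sum-of-sector-products computation the paper uses to prove its global-sector analogue, Theorem 1. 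No gaps in the block algebra or the sign argument.
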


Lemma \ref{Lemma 5} considers the problem of quadratic constraints on the local offset sector at the level of the activation function of the entire neural network. Since our interval observers need to work properly for any input, it is necessary to consider quadratic constraints on the activation function for the global sector. Let $\underline{\hat{v}}\to-\infty,\hat{\overline{v}}\to\infty$, considering the activation function $\psi(v)=tanh(v)$, then (\ref{Formula 20}) holds if $\alpha=0,\beta=1$. According to Definition \ref{Definition 2}, $v^\ast\in\mathbb{R}$ in this case, it is feasible that $v^\ast=\underline{v}_i^{[l]}$ or $v^\ast=\overline{v}_i^{[l]}, l=1,\dots,L,i=1,\dots,n_l$. Thus, we can get
\begin{align}
\alpha\leq \frac{\psi^{[l]}(v_i^{[l]})-\psi^{[l]}(\underline{v}_i^{[l]})}{v_i^{[l]}-\underline{v}_i^{[l]}}\leq \beta,
\label{Formula 22}
\\
\alpha\leq \frac{\psi^{[l]}(\overline{v}_i^{[l]})-\psi^{[l]}(v_i^{[l]})}{\overline{v}_i^{[l]}-v_i^{[l]}}\leq \beta .
\label{Formula 22_a}
\end{align}

Similarly, considering the case for the global sector, we can obtain the global sector quadratic constraints on the activation function applied to the neural network (\ref{Formula 5}) and the auxiliary neural networks (\ref{Formula 12}) and (\ref{Formula 12_a}) in the error dynamical system (\ref{Formula 9}) as follows.

\begin{Theorem}\label{Theorem 1}
Given $\alpha_\Phi,\beta_\Phi\in\mathbb{R}^{n_\Phi}$ and existing $\underline{v}_\Phi,\overline{v}_\Phi,v_\Phi \in\mathbb{R}^{n_\Phi}$, satisfying $\alpha_\Phi\leq\beta_\Phi$, $\underline{v}_\Phi\leq v_\Phi\leq\overline{v}_\Phi$ and $\omega_\Phi=\phi^{n_\Phi}(v_\Phi)$. 
Consider the definition of the neural network (\ref{Formula 5}) and auxiliary neural networks (\ref{Formula 12}) and (\ref{Formula 12_a}), for exactly the same activation function of the concatenation $\phi^{n_\Phi}=[\psi,\dots,\psi]:\mathbb{R}^{n_\Phi}\to\mathbb{R}^{n_\Phi}$.  Given $\lambda\geq 0$ where $\lambda\in\mathbb{R}^{n_\Phi}$, we have
\begin{equation}
\begin{aligned}
\Pi=\begin{bmatrix}
v_\Phi-\underline{v}_\Phi\\\overline{v}_\Phi-v_\Phi\\\omega_\Phi-\underline{\omega}_\Phi\\\overline{\omega}_\Phi-\omega_\Phi
\end{bmatrix}^T
\Psi^T_\Phi M_\Phi(\lambda)\Psi_\Phi
\begin{bmatrix}
v_\Phi-\underline{v}_\Phi\\\overline{v}_\Phi-v_\Phi\\\omega_\Phi-\underline{\omega}_\Phi\\\overline{\omega}_\Phi-\omega_\Phi
\end{bmatrix}\geq 0
\end{aligned} ,\label{Formula 23}
\end{equation}
where 
\begin{equation}
\begin{aligned}
\Psi_\Phi&=
\begin{bmatrix}
diag(\beta_\Phi)&0_{n_\Phi}&-I_{n_\Phi}&0_{n_\Phi}\\
0_{n_\Phi}&diag(\beta_\Phi)&0_{n_\Phi}&-I_{n_\Phi}\\
-diag(\alpha_\Phi)&0_{n_\Phi}&I_{n_\Phi}&0_{n_\Phi}\\
0_{n_\Phi}&-diag(\alpha_\Phi)&0_{n_\Phi}&I_{n_\Phi}\\
\end{bmatrix} ,\\
M_\Phi(\lambda)&=
\begin{bmatrix}
0_{n_\Phi}&0_{n_\Phi}&diag(\lambda)&0_{n_\Phi}\\
0_{n_\Phi}&0_{n_\Phi}&0_{n_\Phi}&diag(\lambda)\\
diag(\lambda)&0_{n_\Phi}&0_{n_\Phi}&0_{n_\Phi}\\
0_{n_\Phi}&diag(\lambda)&0_{n_\Phi}&0_{n_\Phi} \\
\end{bmatrix},
\end{aligned} \label{Formula 24} \notag
\end{equation}
and $\alpha_\Phi=[\alpha,\dots,\alpha]^T,\beta_\Phi=[\beta,\dots,\beta]^T\in\mathbb{R}^{n_\Phi}$, which can be obtained by analyzing the 
global sector constraints on a single activation function.
\begin{proof}
According to (\ref{Formula 23}), for any $\underline{v}_\Phi\leq v_\Phi\leq\overline{v}_\Phi$, one can obtain
\begin{equation}
\begin{aligned}
&\begin{bmatrix}
v_\Phi-\underline{v}_\Phi\\\overline{v}_\Phi-v_\Phi\\\omega_\Phi-\underline{\omega}_\Phi\\\overline{\omega}_\Phi-\omega_\Phi
\end{bmatrix}^T
\Psi^T_\Phi M_\Phi(\lambda)\Psi_\Phi
\begin{bmatrix}
v_\Phi-\underline{v}_\Phi\\\overline{v}_\Phi-v_\Phi\\\omega_\Phi-\underline{\omega}_\Phi\\\overline{\omega}_\Phi-\omega_\Phi
\end{bmatrix}\\
=&
\begin{bmatrix}
\Delta\underline{v}_\Phi\\\Delta\overline{v}_\Phi\\\Delta\underline{\omega}_\Phi\\\Delta\overline{\omega}_\Phi
\end{bmatrix}^T
\Psi^T_\Phi M_\Phi(\lambda)\Psi_\Phi
\begin{bmatrix}
\Delta\underline{v}_\Phi\\\Delta\overline{v}_\Phi\\\Delta\underline{\omega}_\Phi\\\Delta\overline{\omega}_\Phi
\end{bmatrix}\\
=&\sum_{l=1}^L\sum_{i=1}^{n_l}\lambda_i^{[l]}(\Delta\underline{\omega}_i^{[l]}-\alpha\Delta\underline{v}_i^{[l]})(\beta\Delta\underline{v}_i^{[l]}-\Delta\underline{\omega}_i^{[l]})\\
&~~~~+\sum_{l=1}^L\sum_{i=1}^{n_l}\lambda_i^{[l]}(\Delta\overline{\omega}_i^{[l]}-\alpha\Delta\overline{v}_i^{[l]})(\beta\Delta\overline{v}_i^{[l]}-\Delta\overline{\omega}_i^{[l]}) ,
 \end{aligned} 
\notag
\end{equation}
where $\Delta\underline{v}_i^{[l]}=v_i^{[l]}-\underline{v}_i^{[l]}$, $\Delta\overline{v}_i^{[l]}=\overline{v}_i^{[l]}-v_i^{[l]},\Delta\underline{\omega}_i^{[l]}=\omega_i^{[l]}-\underline{\omega}_i^{[l]}$, $\Delta\overline{\omega}_i^{[l]}=\overline{\omega}_i^{[l]}-\omega_i^{[l]}$. Using (\ref{Formula 22}) and (\ref{Formula 22_a}), it is easy to see that each term in the equation is non-negative in the case of $\lambda_i^{[l]}\geq0$.
Thus the proof is complete.
\end{proof}
\end{Theorem}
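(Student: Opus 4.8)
The plan is to recognize that Theorem~\ref{Theorem 1} is the two-sided, global counterpart of Lemma~\ref{Lemma 5}: the block matrices $\Psi_\Phi$ and $M_\Phi(\lambda)$ are assembled from two interleaved copies of the single-reference-point factors $\tilde{\Psi}_\Phi$ and $\tilde{M}_\Phi(\lambda)$ appearing in Lemma~\ref{Lemma 5}. Accordingly, my first step is purely algebraic: form the four ``slack'' vectors obtained by applying $\Psi_\Phi$ to the stacked increment vector, writing $\Delta\underline{v}_\Phi = v_\Phi-\underline{v}_\Phi$, $\Delta\overline{v}_\Phi=\overline{v}_\Phi-v_\Phi$, $\Delta\underline{\omega}_\Phi=\omega_\Phi-\underline{\omega}_\Phi$, $\Delta\overline{\omega}_\Phi=\overline{\omega}_\Phi-\omega_\Phi$. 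Reading off the block rows of $\Psi_\Phi$ shows that the lower increments $(\Delta\underline{v}_\Phi,\Delta\underline{\omega}_\Phi)$ and the upper increments $(\Delta\overline{v}_\Phi,\Delta\overline{\omega}_\Phi)$ never mix, so after reordering the variables the quadratic form $\Psi_\Phi^T M_\Phi(\lambda)\Psi_\Phi$ decouples into two identical blocks of the form $\tilde{\Psi}_\Phi^T\tilde{M}_\Phi(\lambda)\tilde{\Psi}_\Phi$.

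Next I would carry out the multiplication by $M_\Phi(\lambda)$ and complete the quadratic form. Because the only nonzero blocks of $M_\Phi(\lambda)$ are the off-diagonal $\mathrm{diag}(\lambda)$ terms pairing each $\beta$-row with the matching $\alpha$-row inside a block, the form collapses to a single sum over layers $l=1,\dots,L$ and neurons $i=1,\dots,n_l$ in which each summand is $\lambda_i^{[l]}$ times a product of two scalar factors. Specifically, I expect the lower block to contribute $\lambda_i^{[l]}(\Delta\underline{\omega}_i^{[l]}-\alpha\,\Delta\underline{v}_i^{[l]})(\beta\,\Delta\underline{v}_i^{[l]}-\Delta\underline{\omega}_i^{[l]})$ and the upper block the analogous term in the overline increments. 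This step is routine but demands care in tracking which $\beta$-row is paired with which $\alpha$-row through $\mathrm{diag}(\lambda)$; an overall positive constant from the symmetric off-diagonal structure is harmless since it does not affect the sign.

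The decisive step is to certify that each scalar factor-product is nonnegative. Here I would invoke the global sector property of the common activation $\psi$: since $\psi$ satisfies the sector $[\alpha,\beta]$ around \emph{any} reference point, I may take the reference point to be $\underline{v}_i^{[l]}$ to obtain (\ref{Formula 22}), and to be $v_i^{[l]}$ to obtain (\ref{Formula 22_a}), using $\underline{v}_\Phi\leq v_\Phi\leq\overline{v}_\Phi$ and $\omega_\Phi=\phi^{n_\Phi}(v_\Phi)$. Each secant-slope bound is exactly the assertion that the corresponding factor-product is $\geq 0$. Combined with $\lambda_i^{[l]}\geq 0$, every summand is nonnegative, whence $\Pi\geq 0$.

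I expect the main obstacle to be conceptual rather than computational: correctly pairing each increment with the appropriate reference point so that the sector bound applies in the intended direction (the lower increment compares $v_i^{[l]}$ against $\underline{v}_i^{[l]}$, whereas the upper increment compares $\overline{v}_i^{[l]}$ against $v_i^{[l]}$), and confirming that it is the \emph{global} sector, not merely a local one, that allows the single pair $(\alpha,\beta)$ to bound \emph{both} secant slopes over an unbounded input range. Once the block-diagonal splitting is recognized, the matrix bookkeeping in the first two steps is straightforward.
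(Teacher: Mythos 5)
Your proposal follows essentially the same route as the paper's own proof: expand $\Pi$ into per-layer, per-neuron terms, obtaining $\sum_{l,i}\lambda_i^{[l]}(\Delta\underline{\omega}_i^{[l]}-\alpha\Delta\underline{v}_i^{[l]})(\beta\Delta\underline{v}_i^{[l]}-\Delta\underline{\omega}_i^{[l]})$ plus the analogous overline sum, then certify each summand nonnegative via the global sector bounds (\ref{Formula 22}) and (\ref{Formula 22_a}) together with $\lambda_i^{[l]}\geq0$. Your extra observations (the decoupling into two copies of the Lemma~\ref{Lemma 5} structure and the harmless factor of $2$ from the off-diagonal multiplier) are consistent refinements of the same argument, so the proposal is correct.
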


\begin{Remark}\label{Remark 4}
For the description of the quadratic constraints on  activation functions, our approach uses an extension based on the description of the local constraint in Definition \ref{Definition 2} to obtain the global constraint needed for the subsequent proof. For example, for the activation function $\psi(v)=tanh(v)$ mentioned in Figure \ref{FIG:1}, we have $\alpha=0$, $\beta=1$. Therefore, we can obtain $\alpha_\Phi=[0,\dots,0]^T,~\beta_\Phi=[1,\dots,1]^T$.
Definition \ref{Definition 2} can be generalized from local constraint to global constraint, and a more detailed discussion on the quadratic constraints can be found in the paper \citep{9388885}.
\end{Remark}

\subsection{Design of Interval Observer}
This section uses the Lyapunov stability theory and the global sector constraint of the activation function given in Theorem \ref{Theorem 1} to obtain the tractable linear matrix inequality (LMI) and  conditions that ensure the error dynamical system (\ref{Formula 9}) a positive system and practically stable.

\begin{Theorem}\label{Theorem 2}
Considering the error dynamical system (\ref{Formula 9}) with the definition of neural network (\ref{Formula 5}) and the definition of auxiliary neural networks (\ref{Formula 12}) and (\ref{Formula 12_a}), and nonlinear function under Assumptions \ref{Assumption 1}, \ref{Assumption 2}, if there exist diagonal matrix $Q\succ0$, diagonal matrix $S$ and block diagonal matrix $M$, real number $k_1>0$, such that
\begin{align}
\begin{bmatrix}
\varGamma_1
&\varGamma_2
&Q\tilde{B}_\Phi&Q\tilde{B}_u&Q\\
\varGamma_2^T
&\varGamma_3
&0&0&0\\
\tilde{B}_\Phi^TQ&0&-I&0&0\\
\tilde{B}_u^TQ&0&0&-I&0\\
Q&0&0&0&-I\\
\end{bmatrix}
\prec 0 , \label{Formula 25}
\\
Q\tilde{A}-M\tilde{C}+S\geq0 ,
\label{Formula 26}
\end{align}
where 
$\varGamma_1=Q\tilde{A}-M\tilde{C}+\tilde{A}^TQ-\tilde{C}^TM^T+k_1I+\tilde{N}_{\Phi x}^T\tilde{N}_{\Phi x}+\tilde{N}^T_{vx}F_{\alpha\beta}\tilde{N}_{vx}$, $
\varGamma_2=\tilde{N}_{\Phi x}^T \tilde{N}_{\Phi \omega}+\tilde{N}_{v x}^TF_{\alpha\beta}\tilde{N}_{v \omega}+\tilde{N}_{vx}^TF_{\alpha+\beta}$, and $
\varGamma_3=\tilde{N}_{\Phi \omega}^T\tilde{N}_{\Phi \omega}+\tilde{N}_{v \omega}^TF_{\alpha\beta}\tilde{N}_{v \omega}+F_{\alpha+\beta}\tilde{N}_{v \omega}+\tilde{N}_{v \omega}^TF_{\alpha+\beta}+F_\lambda$ in which 
\begin{equation}
\begin{aligned}
&\tilde{A}=
\begin{bmatrix}
A&0\\0&A
\end{bmatrix},~
\tilde{B}_\Phi=
\begin{bmatrix}
B_\Phi&0\\0&B_\Phi
\end{bmatrix},~
\tilde{B}_u=
\begin{bmatrix}
B_u&0\\0&B_u
\end{bmatrix} ,
\\
&\tilde{C}=
\begin{bmatrix}
C&0\\0&C
\end{bmatrix},~
\tilde{L}=
\begin{bmatrix}
\underline{L}&0\\0&\overline{L}
\end{bmatrix},~
M=
\begin{bmatrix}
M_1&0\\0&M_2
\end{bmatrix} ,\\
&\begin{bmatrix}
\tilde{N}_{\Phi x}&\tilde{N}_{\Phi \omega}\\\tilde{N}_{v x}&\tilde{N}_{v\omega}
\end{bmatrix}=
\left[\begin{array}{cc;{2pt/2pt}cc}
\overline{N}_{\Phi x}&\underline{N}_{\Phi x}&\overline{N}_{\Phi \omega}&\underline{N}_{\Phi \omega}\\
\underline{N}_{\Phi x}&\overline{N}_{\Phi x}&\underline{N}_{\Phi \omega}&\overline{N}_{\Phi \omega}\\
\hdashline[2pt/2pt]
\overline{N}_{v x}&\underline{N}_{v x}&\overline{N}_{v \omega}&\underline{N}_{v \omega}\\
\underline{N}_{v x}&\overline{N}_{v x}&\underline{N}_{v \omega}&\overline{N}_{v \omega}\\
\end{array}\right] ,\\
&\Psi^T_\Phi M_\Phi(\lambda)\Psi_\Phi=\begin{bmatrix}
F_{\alpha\beta}&F_{\alpha+\beta}\\F_{\alpha+\beta}&F_{\lambda}
\end{bmatrix}=
\left[\begin{array}{cc;{2pt/2pt}cc}
\lambda_1&0_{n_\Phi}&\lambda_2&0_{n_\Phi}\\
0_{n_\Phi}&\lambda_1&0_{n_\Phi}&\lambda_2\\
\hdashline[2pt/2pt]
\lambda_2&0_{n_\Phi}&\lambda_3&0_{n_\Phi}\\
0_{n_\Phi}&\lambda_2&0_{n_\Phi}&\lambda_3\\
\end{array}\right] ,\\
&\lambda_1=-2\alpha\beta diag(\lambda),~
\lambda_2=(\alpha+\beta) diag(\lambda),~
\lambda_3=-2diag(\lambda) ,\\
\end{aligned}\notag
\end{equation}
and $\alpha,\beta\in\mathbb{R}$ are the exact values determined by the chosen activation function, and $k_1=3max\{(\underline{a}_1^2+\overline{a}_1^2),(\underline{a}_2^2+\overline{a}_2^2)\}$, which can be calculated by Assumption \ref{Assumption 2},  then the error dynamical system (\ref{Formula 9}) is a practically stable and positive system. The system (\ref{Formula 8}) is an interval observer of the nonlinear system (\ref{Formula 2}) and the observer gains \wesley{matrices} $\underline{L}$ \wesley{and} $ \overline{L}$ can be obtained by $\tilde{L} =Q^{-1}M$.
\begin{proof}
Since (\ref{Formula 25}) holds, let $E=\tilde{A}-\tilde{L}\tilde{C}$ and according to the relation $\tilde{L}=Q^{-1}M$, (\ref{Formula 25}) can be rewritten as
\begin{equation}
\begin{aligned}
\begin{bmatrix}
\varGamma_4
&\varGamma_2
&Q\tilde{B}_\Phi&Q\tilde{B}_u&Q\\
\varGamma_2^T
&\varGamma_3&0&0&0\\
\tilde{B}_\Phi^TQ&0&-I&0&0\\
\tilde{B}_u^TQ&0&0&-I&0\\
Q&0&0&0&-I\\
\end{bmatrix}
\prec0,\\
\end{aligned}\label{Formula 27}
\end{equation}
where $\varGamma_4=QE+E^TQ+k_1I+\tilde{N}_{\Phi x}^T\tilde{N}_{\Phi x}+\tilde{N}^T_{vx}F_{\alpha\beta}\tilde{N}_{vx}$. 

Using the Schur complement equivalence, (\ref{Formula 27}) can be equivalent to be
\begin{equation}
\begin{aligned}
\begin{bmatrix}
\varGamma_5
&\varGamma_2\\
\varGamma_2^T
&\varGamma_3\\
\end{bmatrix}
\prec0,\\
\end{aligned}\label{Formula 28}
\end{equation}
where $\varGamma_5=QE+E^TQ+k_1I+Q\tilde{B}_\Phi\tilde{B}_\Phi^TQ+Q\tilde{B}_u\tilde{B}_u^TQ+QQ+\tilde{N}_{\Phi x}^T\tilde{N}_{\Phi x}+\tilde{N}^T_{vx}F_{\alpha\beta}\tilde{N}_{vx}$.

Split (\ref{Formula 28}) into two matrices $J_1$ \wesley{and} $J_2$ such that
\begin{equation}
\begin{aligned}
J_1&=
\begin{bmatrix}
\varGamma_6+\tilde{N}_{\Phi x}^T\tilde{N}_{\Phi x}
&\tilde{N}_{\Phi x}^T \tilde{N}_{\Phi \omega}\\
\tilde{N}_{\Phi \omega}^T\tilde{N}_{\Phi x}
&\tilde{N}_{\Phi \omega}^T\tilde{N}_{\Phi \omega}\\
\end{bmatrix},\\
J_2&=
\begin{bmatrix}
\tilde{N}^T_{vx}F_{\alpha\beta}\tilde{N}_{vx}
&\varGamma_7\\
\tilde{N}_{v \omega}^TF_{\alpha\beta}\tilde{N}_{vx}+F_{\alpha+\beta}\tilde{N}_{vx}
&\varGamma_8\\
\end{bmatrix}.\\
\end{aligned}\label{Formula 29}
\end{equation}
where $\varGamma_6=QE+E^TQ+k_1I+Q\tilde{B}_\Phi\tilde{B}_\Phi^TQ+Q\tilde{B}_u\tilde{B}_u^TQ+QQ$, $\varGamma_7=\tilde{N}_{v x}^TF_{\alpha\beta}\tilde{N}_{v \omega}+\tilde{N}_{vx}^TF_{\alpha+\beta}$, $\varGamma_8=\tilde{N}_{v \omega}^TF_{\alpha\beta}\tilde{N}_{v \omega}+F_{\alpha+\beta}\tilde{N}_{v \omega}+\tilde{N}_{v \omega}^TF_{\alpha+\beta}+F_\lambda$.

Clearly, $J_1+J_2\prec0$ and the following relation holds
\begin{equation}
\begin{aligned}
J_1&=
\begin{bmatrix}
I&0\\\tilde{N}_{\Phi x}&\tilde{N}_{\Phi \omega}
\end{bmatrix}^T
\begin{bmatrix}
\varGamma_6&0\\
0&I\\
\end{bmatrix}
\begin{bmatrix}
I&0\\\tilde{N}_{\Phi x}&\tilde{N}_{\Phi \omega}
\end{bmatrix},\\
J_2&=
\begin{bmatrix}
\tilde{N}_{v x}&\tilde{N}_{v \omega}\\0&I
\end{bmatrix}^T
\begin{bmatrix}
F_{\alpha\beta}&F_{\alpha+\beta}\\F_{\alpha+\beta}&F_{\lambda}
\end{bmatrix}
\begin{bmatrix}
\tilde{N}_{v x}&\tilde{N}_{v \omega}\\0&I
\end{bmatrix}.\\
\end{aligned}\label{Formula 30}
\end{equation}

From (\ref{Formula 16}), multiplying of the matrix inequality $J_1+J_2\prec0$ left and right by $[(x-\underline{x})^T,(\overline{x}-x)^T,(\omega_\Phi-\underline{\omega}_\Phi)^T,(\overline{\omega}_\Phi-\omega_\Phi)^T]$ and its transpose, we have
\begin{equation}
\begin{aligned}
\begin{bmatrix}
x-\underline{x}\\ \overline{x}-x\\\Phi-\underline{\Phi}\\ \overline{\Phi}-\Phi
\end{bmatrix}^T
\begin{bmatrix}
\varGamma_6&0\\
0&I\\
\end{bmatrix}
\begin{bmatrix}
x-\underline{x}\\ \overline{x}-x\\\Phi-\underline{\Phi}\\ \overline{\Phi}-\Phi
\end{bmatrix}
+\Pi<0 .
\end{aligned}
\label{Formula 31}
\end{equation}

Combining (\ref{Formula 23}) in Theorem \ref{Theorem 1}, we can conclude that
\begin{equation}
\begin{aligned}
\begin{bmatrix}
x-\underline{x}\\ \overline{x}-x\\\Phi-\underline{\Phi}\\ \overline{\Phi}-\Phi
\end{bmatrix}^T
\begin{bmatrix}
\varGamma_6&0\\
0&I\\
\end{bmatrix}
\begin{bmatrix}
x-\underline{x}\\ \overline{x}-x\\\Phi-\underline{\Phi}\\ \overline{\Phi}-\Phi
\end{bmatrix}
<0.
\end{aligned}
\label{Formula 32}
\end{equation}

Then we consider the error dynamical system (\ref{Formula 9}) and rewrite it as
\begin{equation}
\begin{aligned}
\dot{\tilde{e}}&=
\begin{bmatrix}
A-\underline{L}C&0\\0&A-\overline{L}C
\end{bmatrix}
\begin{bmatrix}
\underline{e}\\\overline{e}
\end{bmatrix}+
\begin{bmatrix}
B_\Phi&0\\0&B_\Phi
\end{bmatrix}
\begin{bmatrix}
\Phi-\underline{\Phi}\\\overline{\Phi}-\Phi
\end{bmatrix}\\
&~~~~+\begin{bmatrix}
B_u&0\\0&B_u
\end{bmatrix}
\begin{bmatrix}
u-\underline{u}\\\overline{u}-u
\end{bmatrix}+
\begin{bmatrix}
f(x)-\underline{f}(\underline{x},\overline{x})\\\overline{f}(\underline{x,\overline{x}})-f(x)
\end{bmatrix}\\
&=E\tilde{e}+\tilde{B}_\Phi\Delta \Phi+\tilde{B}_u\Delta u+\tilde{f}.
\end{aligned}
\label{Formula 33}
\end{equation}

To prove the stability of error dynamical system (\ref{Formula 9}), let us consider a Lyapunov function $V(t)=\tilde{e}^TQ\tilde{e}$, whose time derivative takes the form:
\begin{equation}
\begin{aligned}
\dot{V}(t)&=(\dot{\tilde{e}}^TQ\tilde{e}+\tilde{e}^TQ\dot{\tilde{e}})\\
&=(E\tilde{e}+\tilde{B}_\Phi\Delta \Phi+\tilde{B}_u\Delta u+\tilde{f})^TQ\tilde{e}+\tilde{e}^TQ(E\tilde{e}+\tilde{B}_\Phi\Delta \Phi+\tilde{B}_u\Delta u+\tilde{f})\\
&=\tilde{e}^T(QE+E^TQ)\tilde{e}+2\tilde{e}^TQ\tilde{B}_\Phi\Delta\Phi+2\tilde{e}^TQ\tilde{B}_u\Delta u+2\tilde{f}^TQ\tilde{e}.
\end{aligned}
\label{Formula 34}
\end{equation}

The following inequalities are introduced
\begin{equation}
\begin{aligned}
2\tilde{e}^TQ\tilde{B}_\Phi\Delta\Phi&\leq\tilde{e}^TQ\tilde{B}_\Phi\tilde{B}_\Phi^TQ\tilde{e}+\Delta\Phi^T\Delta\Phi,\\
2\tilde{e}^TQ\tilde{B}_u\Delta u&\leq\tilde{e}^TQ\tilde{B}_u\tilde{B}_u^TQ\tilde{e}+\Delta u^T\Delta u,\\
2\tilde{f}^TQ\tilde{e}&\leq\tilde{e}^TQQ\tilde{e}+\tilde{f}^T\tilde{f}.\\
\end{aligned}
\notag
\end{equation}

Under Assumption \ref{Assumption 2}, it implies that

\begin{equation}
\begin{aligned}
\tilde{f}^T\tilde{f}&=
(\underline{a}_1\underline{e}+\underline{a}_2\overline{e}+\underline{\rho})^T(\underline{a}_1\underline{e}+\underline{a}_2\overline{e}+\underline{\rho})+
(\overline{a}_1\underline{e}+\overline{a}_2\overline{e}+\overline{\rho})^T(\overline{a}_1\underline{e}+\overline{a}_2\overline{e}+\overline{\rho})\\
&=\Vert\underline{a}_1\underline{e}+\underline{a}_2\overline{e}+\underline{\rho}\Vert^2+\Vert\overline{a}_1\underline{e}+\overline{a}_2\overline{e}+\overline{\rho}\Vert^2\\
&\leq3(\underline{a}_1^2\Vert\underline{e}\Vert^2+\underline{a}_2^2\Vert\overline{e}\Vert^2+\Vert\underline{\rho}\Vert^2)+3(\overline{a}_1^2\Vert\underline{e}\Vert^2+\overline{a}_2^2\Vert\overline{e}\Vert^2+\Vert\overline{\rho}\Vert^2)\\
&=\begin{bmatrix}
\underline{e}\\\overline{e}
\end{bmatrix}^T
\begin{bmatrix}
3(\underline{a}_1^2+\overline{a}_1^2)&0\\0&3(\underline{a}_2^2+\overline{a}_2^2)
\end{bmatrix}
\begin{bmatrix}
\underline{e}\\\overline{e}
\end{bmatrix}
+3(\Vert\underline{\rho}\Vert^2+\Vert\overline{\rho}\Vert^2)\\
&\leq\tilde{e}^T k_1I\tilde{e}+3k_2 ,
\end{aligned}
\label{Formula 35} \notag
\end{equation}
where $k_1=3max\{(\underline{a}_1^2+\overline{a}_1^2),(\underline{a}_2^2+\overline{a}_2^2)\},k_2=\Vert\underline{\rho}\Vert^2+\Vert\overline{\rho}\Vert^2$.

Therefore, (\ref{Formula 34}) implies that
\begin{equation}
\begin{aligned}
\dot{V}(t)
&\leq\tilde{e}^T\varGamma_7\tilde{e}+\Delta\Phi^T\Delta\Phi+\Delta u^T\Delta u+\tilde{f}^T\tilde{f}\\
&\leq\tilde{e}^T\varGamma_7\tilde{e}+\Delta\Phi^T\Delta\Phi+\tilde{e}^T k_1I\tilde{e}+\Delta u^T\Delta u+3k_2\\
&=\tilde{e}^T(\varGamma_7+k_1I)\tilde{e}+\Delta\Phi^T\Delta\Phi+\Delta u^T\Delta u+3k_2\\
&=\tilde{e}^T\varGamma_6\tilde{e}+\Delta\Phi^T\Delta\Phi+\Delta u^T\Delta u+3k_2,\\
\end{aligned}
\label{Formula 36} \notag
\end{equation}
where $\varGamma_7=QE+E^TQ+Q\tilde{B}_\Phi\tilde{B}_\Phi^TQ+Q\tilde{B}_u\tilde{B}_u^TQ+QQ$.

According to (\ref{Formula 32}), we can get the following condition
\begin{equation}
\begin{aligned}
\begin{bmatrix}
\tilde{e}\\ \Delta\Phi
\end{bmatrix}^T
\begin{bmatrix}
\varGamma_6&0\\
0&I\\
\end{bmatrix}
\begin{bmatrix}
\tilde{e}\\ \Delta\Phi
\end{bmatrix}
<0,
\end{aligned}
\label{Formula 37}
\end{equation}
from which it can be inferred that
\begin{equation}
\begin{aligned}
\tilde{e}^T\varGamma_6\tilde{e}+\Delta\Phi^T\Delta\Phi<0.
\end{aligned}
\label{Formula 38}
\end{equation}

According to the conditions (\ref{Formula 38}) derived above, in any case, there must be a real number $\varepsilon>0$, such that the following equation holds
\begin{equation}
\begin{aligned}
\tilde{e}^T(\varGamma_6+\varepsilon Q)\tilde{e}+\Delta\Phi^T\Delta\Phi<0.
\end{aligned}
\label{Formula 39}
\end{equation}
which can be rewritten to
\begin{equation}
\begin{aligned}
\tilde{e}^T\varGamma_6\tilde{e}+\Delta\Phi^T\Delta\Phi<-\varepsilon\tilde{e}^T Q\tilde{e}.
\end{aligned}
\label{Formula 40}
\end{equation}

Moreover, based on (\ref{Formula 36}), we can conclude that
\begin{equation}
\begin{aligned}
\dot{V}(t)&\leq-\varepsilon V+c_2 ,\\
\end{aligned}
\label{Formula 41}
\end{equation}
where $c_2\in\mathbb{R}_+$ and $c_2\geq\Delta u^T\Delta u+3k_2$.
According to Lemma \ref{Lemma 2}, the error dynamical system (\ref{Formula 9}) is practically stable.

Adding or subtracting $S$ does not affect the Metzler property of the expression because $S$ is a diagonal matrix. Thus $Q\tilde{A}-M\tilde{C}$ is Metzler considering matrix inequality (\ref{Formula 26}). Based on $M=Q\tilde{L}$, then $Q\tilde{A}-Q\tilde{L}\tilde{C}$ is Metzler. Multiplying the diagonal matrix $Q$ will not change the Metzler features based on Lemma \ref{Lemma 3}, so $\tilde{A}-\tilde{L}\tilde{C}$ is Metzler. Thus the proof is complete.

\end{proof}
\end{Theorem}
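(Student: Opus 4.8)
The plan is to prove the two asserted properties separately: \emph{practical stability} will follow from the LMI~(\ref{Formula 25}), and the \emph{positivity} (Metzler) structure from the linear inequality~(\ref{Formula 26}). I would begin the stability argument by stacking the two halves of the error system~(\ref{Formula 9}) into a single dynamics $\dot{\tilde e}=E\tilde e+\tilde B_\Phi\Delta\Phi+\tilde B_u\Delta u+\tilde f$ with $E=\tilde A-\tilde L\tilde C$, and by removing the bilinearity in the gain through the substitution $M=Q\tilde L$, so that the off-diagonal block $Q\tilde A-M\tilde C$ becomes $QE$. Applying the Schur complement to the three trailing $-I$ blocks of~(\ref{Formula 25}) folds the terms $Q\tilde B_\Phi\tilde B_\Phi^TQ$, $Q\tilde B_u\tilde B_u^TQ$ and $QQ$ into the leading block, reducing the $5\times5$ LMI to a $2\times2$ block inequality whose $(1,1)$ entry I denote $\varGamma_6:=QE+E^TQ+k_1I+Q\tilde B_\Phi\tilde B_\Phi^TQ+Q\tilde B_u\tilde B_u^TQ+QQ$.

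The crux is to inject the quadratic-constraint abstraction of Theorem~\ref{Theorem 1} into this reduced inequality. I would split the reduced matrix additively as $J_1+J_2\prec0$ along the divide between the $\Phi$-related blocks $\tilde N_{\Phi x},\tilde N_{\Phi\omega}$ and the $v$-related blocks $\tilde N_{vx},\tilde N_{v\omega}$ weighted by $\Psi_\Phi^T M_\Phi(\lambda)\Psi_\Phi=\left[\begin{smallmatrix}F_{\alpha\beta}&F_{\alpha+\beta}\\F_{\alpha+\beta}&F_\lambda\end{smallmatrix}\right]$. Each summand factors as a congruence: $J_1=\left[\begin{smallmatrix}I&0\\\tilde N_{\Phi x}&\tilde N_{\Phi\omega}\end{smallmatrix}\right]^T\mathrm{diag}(\varGamma_6,I)\left[\begin{smallmatrix}I&0\\\tilde N_{\Phi x}&\tilde N_{\Phi\omega}\end{smallmatrix}\right]$ and $J_2=\left[\begin{smallmatrix}\tilde N_{vx}&\tilde N_{v\omega}\\0&I\end{smallmatrix}\right]^T\left[\begin{smallmatrix}F_{\alpha\beta}&F_{\alpha+\beta}\\F_{\alpha+\beta}&F_\lambda\end{smallmatrix}\right]\left[\begin{smallmatrix}\tilde N_{vx}&\tilde N_{v\omega}\\0&I\end{smallmatrix}\right]$. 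Multiplying $J_1+J_2\prec0$ on the left and right by the difference vector $[(x-\underline x)^T,(\overline x-x)^T,(\omega_\Phi-\underline\omega_\Phi)^T,(\overline\omega_\Phi-\omega_\Phi)^T]$ and its transpose and invoking the block identity~(\ref{Formula 16}) turns the $J_1$ part into a quadratic form in $[x-\underline x,\overline x-x,\Phi-\underline\Phi,\overline\Phi-\Phi]$ with kernel $\mathrm{diag}(\varGamma_6,I)$, while the $J_2$ part becomes exactly the quantity $\Pi$ of~(\ref{Formula 23}). Because $\Pi\ge0$ by Theorem~\ref{Theorem 1}, I may discard it and keep the strict inequality for the $\varGamma_6$-form alone.

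To close the stability argument I would take $V=\tilde e^TQ\tilde e$, differentiate along the stacked dynamics, bound the three cross terms by Young's inequality, and use Assumption~\ref{Assumption 2} to dominate $\tilde f^T\tilde f$ by $\tilde e^Tk_1I\tilde e+3k_2$ for a constant $k_2\ge0$; this yields $\dot V\le\tilde e^T\varGamma_6\tilde e+\Delta\Phi^T\Delta\Phi+\Delta u^T\Delta u+3k_2$. Specializing the $\varGamma_6$-form obtained above to the coordinates $(\tilde e,\Delta\Phi)$ gives $\tilde e^T\varGamma_6\tilde e+\Delta\Phi^T\Delta\Phi<0$, and strictness supplies an $\varepsilon>0$ with $\tilde e^T\varGamma_6\tilde e+\Delta\Phi^T\Delta\Phi<-\varepsilon\,\tilde e^TQ\tilde e$. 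Substituting this bound produces $\dot V\le-\varepsilon V+c_2$ with $c_2\ge\Delta u^T\Delta u+3k_2$ finite, since $u$ lies in the known interval of Assumption~\ref{Assumption 4}; Lemma~\ref{Lemma 2} then yields practical stability.

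The positivity part is short. As $S$ is diagonal, subtracting it from~(\ref{Formula 26}) leaves the off-diagonal entries unchanged, so $Q\tilde A-M\tilde C=Q(\tilde A-\tilde L\tilde C)=QE$ is Metzler; applying Lemma~\ref{Lemma 3} with the diagonal positive-definite matrix $Q^{-1}$ shows that $E=\tilde A-\tilde L\tilde C$ is itself Metzler, i.e.\ both $A-\underline LC$ and $A-\overline LC$ lie in $\mathbb M_{n_x}$, giving~(\ref{proposition_1}) and~(\ref{proposition_4}). Together with~(\ref{proposition_2}) and~(\ref{proposition_3}), which the auxiliary-network construction guarantees through Lemma~\ref{Lemma 4}, Proposition~\ref{Proposition 1} applies and the error system is positive. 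I expect the main obstacle to be the middle step: getting the block partition of $N$ in~(\ref{Formula 16}) right so that the two congruence factorizations are exact and the $J_2$ summand coincides precisely with $\Pi$, since this is the single point at which the quadratic-constraint abstraction enters the Lyapunov inequality.
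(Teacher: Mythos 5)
Your proposal is correct and follows essentially the same route as the paper's own proof: the substitution $M=Q\tilde{L}$, the Schur complement reduction, the $J_1+J_2$ congruence splitting, injecting $\Pi\geq 0$ from Theorem \ref{Theorem 1}, the Lyapunov/Young-inequality bound with Assumption \ref{Assumption 2} giving $\dot{V}\leq-\varepsilon V+c_2$, and the Metzler argument from (\ref{Formula 26}) via Lemma \ref{Lemma 3} combined with Lemma \ref{Lemma 4} and Proposition \ref{Proposition 1}. The only slip is notational: the $(1,1)$ block of the reduced $2\times2$ inequality still contains $\tilde{N}_{\Phi x}^T\tilde{N}_{\Phi x}+\tilde{N}^T_{vx}F_{\alpha\beta}\tilde{N}_{vx}$ (the paper's $\varGamma_5$), not the bare $\varGamma_6$ you name, but your subsequent split into the $\Phi$-related and $v$-related congruences accounts for exactly these terms.
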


\begin{Remark}\label{Remark 5}
It is worth noting that considering the setting $\underline{x}(0)\leq x(0)\leq \overline{x}(0)$ mentioned in Assumption \ref{Assumption 1}, it is possible the left side of (\ref{Formula 31}) is equal to 0 when $t=0$. Since $\underline{u}<u<\overline{u}$, according to the theory of positive systems, the state variable $\underline{e}=\overline{e}=0$ in system (\ref{Formula 9}) when and only when $t = 0$. This means that the case $\underline{x}= x = \overline{x}$ exists only when $t = 0$. Here we consider the fact that (\ref{Formula 31}) does not hold only at this moment $t = 0$ \wesley{and} does not have an impact on the correctness of Theorem \ref{Theorem 2}. The main reason why the case $\underline{x}(0) = x(0) =\overline{x}(0)$ is retained in the assumption on the initial value of the system state is to \wesley{minimize} the usage restrictions of the proposed method.
\end{Remark}

\section{Application to Lateral Vehicle Control Systems}

In this section, the developed run-time safety monitor design methodology is applied to the lateral vehicle control system to evaluate the correctness and applicability of the proposed methodology. 
The National Highway Transportation Safety Administration (NHTSA) has identified lane departures as the leading cause of rollovers in sport utility vehicles (SUVs) and light trucks (http://www.nhtsa.gov). Lateral vehicle control is an important approach to resolving lane departure accidents and has been heavily researched in industry and academia.
Lateral vehicle control means that the vehicle collects road and environmental information via sensors such as magnetic materials, vision systems\wesley{,} or GPS to obtain the vehicle's position relative to the desired path. Control commands are then issued to the vehicle based on a control strategy. The control process can be summarized into two parts: detection and reaction. The detection device evaluates the position of the vehicle relative to the road in real time and determines whether a road deviation has occurred. Once a deviation is detected, the controller issues a warning to the driver and/or intervenes in the vehicle.

\begin{figure}[htb]
\centering
		\includegraphics[scale=0.4]{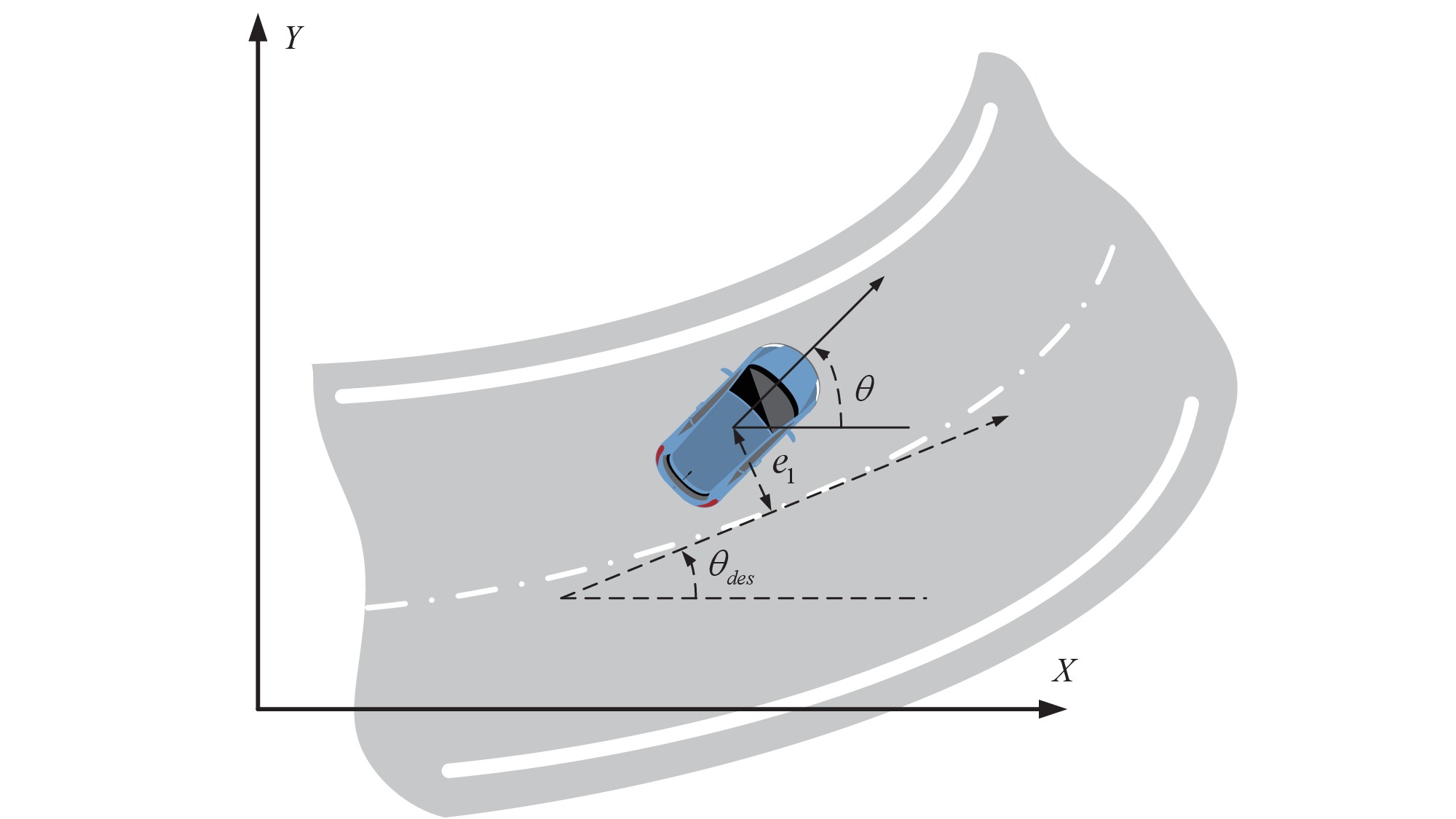}
	\caption{Illustration of Lateral Vehicle Control System.}
	\label{FIG:5}
\end{figure}

In the following example, we consider a “bicycle” model of a vehicle with two degrees of freedom, the lateral position $Y$ of the vehicle, and the yaw angle $\theta$ of the vehicle, as shown in Figure \ref{FIG:5}.
This system is under the control of a neural network controller, which serves to provide intervention when the vehicle leaves the road center line. Consider the lateral vehicle control system from \citep{rajamani2011vehicle}:
\begin{equation}
\begin{aligned}
\mathcal{L}_{eg}:\left\{
\begin{array}{ll}
\dot{x}=Ax+B_\Phi \Phi(x)+B_u u\\
y=Cx
\end{array}
\right.
\end{aligned} ,
\label{Formula 42}
\end{equation}
where system matrices $A$, $B_\Phi$, $B_u$, $C$, and input $u$ are defined by
\begin{equation}
\begin{aligned}
&A=\begin{bmatrix}
0&1&0&0\\
0&-\frac{2C_{af}+2C_{ar}}{mV_x}&\frac{2C_{af}+2C_{ar}}{m}&\frac{-2C_{af}l_f+2C_{ar}l_r}{mV_x}\\
0&0&0&1\\
0&-\frac{2C_{af}l_f-2C_{ar}l_r}{I_zV_x}&\frac{2C_{af}l_f-2C_{ar}l_r}{I_z}&-\frac{2C_{af}l_f^2+2C_{ar}l_r^2}{I_zV_x}\\
\end{bmatrix},
~B_\Phi=\begin{bmatrix}
0\\\frac{2C_{af}}{m}\\0\\\frac{2C_{af}l_f}{I_z}
\end{bmatrix},\\
&B_u=
\begin{bmatrix}
1&1&0&0\\
0&1&0&0\\
0&0&1&0\\
0&0&0&1
\end{bmatrix},
C=
\begin{bmatrix}
0&1&0&0\\
1&0&0&0\\
0&0&1&1\\
0&0&0&1
\end{bmatrix}
,~u=\begin{bmatrix}
0\\-\frac{2C_{af}l_f-2C_{ar}l_r}{mV_x}-V_x\\0\\-\frac{2C_{af}l_f^2+2C_{ar}l_r^2}{I_zV_x}
\end{bmatrix}
\dot{\theta}_{des}.
\end{aligned}
\notag
\end{equation}

Let $x=[e_1,\dot{e}_1,e_2,\dot{e}_2]^T$ denote the state of the system (\ref{Formula 42}). $e_1$ is the distance of the c.g. (center of gravity) of the vehicle from the centreline of the lane (m). $e_2$ is the orientation error of the vehicle with respect to the road (rad), which can be obtained by the equation $e_2=\theta-\theta_{des}$.
$\theta$ is called the heading angle of the vehicle, and $\theta_{des}$ is the desired orientation of the vehicle, with respect to the global X-axis (rad). $\dot{\theta}_{des}=\frac{V_x}{R}$ is defined as the rate of change of the desired orientation of the vehicle(rad/s).  $\Phi(x)$ is the neural network controller, and its output is
the front wheel steering angle (rad). System parameters are given in Table \ref{tab_example}.  By calculation to $u$, we set $\underline{u}=[-1,-3,-1,-1]^T,\overline{u}=[1,-1,1,0]^T$.

\begin{table}[]
    \centering
    \caption{System Parameters for Lateral Vehicle Control System.}
    \begin{tabular}{l|l}
   \hline\hline
       Total mass of vehicle  & $m=1573kg$ \\
       Yaw moment of inertia of vehicle  &  $I_z=2873kg\centerdot m^2$\\
       Longitudinal distance from c.g. to front tires & $l_f=1.1m$ \\
       Longitudinal distance from c.g. to rear tires & $l_r=1.58m$\\
       Front tire cornering stiffness & $C_{af}=80000N/rad$\\
       Rear tire cornering stiffness & $C_{ar}=80000N/rad$ \\
       Longitudinal velocity of the c.g. of the vehicle & $V_x=30m/s$
       \\
       Constant road radius & $R=400m$ \\
     \hline\hline
    \end{tabular}
    \label{tab_example}
\end{table}

The lateral vehicle control system we are discussing does not contain nonlinear functions, which means $f(x)=\underline{f}(\underline{x},\overline{x})=\overline{f}(\underline{x},\overline{x})=0$, so the corresponding interval observer system is as follows
\begin{equation}
\mathcal{M}_{eg}: \left\{
\begin{array}{ll}
\underline{\dot{x}}&=(A-\underline{L}C)\underline{x}+\underline{L}y+B_\Phi\underline{\Phi}(\underline{x},\overline{x})+\underline{u}\\
\dot{\overline{x}}&=(A-\overline{L}C)\overline{x}+\overline{L}y+B_\Phi\overline{\Phi}(\underline{x},\overline{x})+\overline{u}\\
\end{array}
\right. .
\notag
\end{equation}

The following describes how the auxiliary neural networks, $\underline{\Phi}(\underline{x},\overline{x})$\  \wesley{and}  $\overline{\Phi}(\underline{x},\overline{x})$, and the interval observer gains, $\underline{L} $ \wesley{and} $\overline{L}$, are obtained in this example. According to the feedback gain $K$ given in paper \citep{alleyne1997comparison}, the system (\ref{Formula 42}) can operate normally. Based on \wesley{this} operating data, we train the neural network controller $\Phi(x)$, which is parameterized by a 3-layer feedforward neural network with $n_1=5$, $n_2=5$, and $n_3=1$, and $\psi(v)=tanh(v)$ as the activation function of the first two layers\wesley{. T}he third layer does not use the activation function according to the settings in our paper.
The auxiliary neural networks $\underline{\Phi}(\underline{x},\overline{x})$ and $\overline{\Phi}(\underline{x},\overline{x})$ are designed based on $\Phi(x)$ according to (\ref{Formula 10}) and (\ref{Formula 11}). Considering the physical limitations of vehicle dynamics, the range of front wheel steering angles is limited to $[-\pi/6,\pi/6]$, which means that the output of neural network $\Phi(x)$ and auxiliary neural networks, $\underline{\Phi}(\underline{x},\overline{x})$ and $\overline{\Phi}(\underline{x},\overline{x})$, are limited to $[-\pi/6,\pi/6]$. The observer gains, $\underline{L}$ \wesley{and} $\overline{L}$, can be obtained by solving linear matrix inequalities (\ref{Formula 25}) and (\ref{Formula 26}) in Theorem \ref{Theorem 2}.

The run-time boundary estimations of state trajectories of lateral position error $\{e_1,\dot{e}_1\}$ and yaw angle error $\{e_2,\dot{e}_2\}$ during the lateral vehicle control system evolves in time interval $[0, 10]$ are shown in Figures \ref{FIG:3} and \ref{FIG:4}. The lateral position error and yaw angle error decrease significantly after the system reaches a steady state, indicating that the original system operates normally under the action of the neural network controller $\Phi(x)$. It is worth noting that the steady-state values of 
$e_1$ and $e_2$ are not zero \wesley{because} the input due to road curvature $\dot{\theta}_{des}$ is non-zero. The specific physical explanation of these 
steady-state errors can be found in Sections 3.2 and 3.3 of \citep{rajamani2011vehicle}. As shown in the results, the state trajectories (solid line) always run between the upper and lower bounds of the interval observer (dashed line), indicating that the interval observer we have designed can be used for state safety monitoring.

\begin{figure}[ht]
	\centering
		\includegraphics[scale=.43]{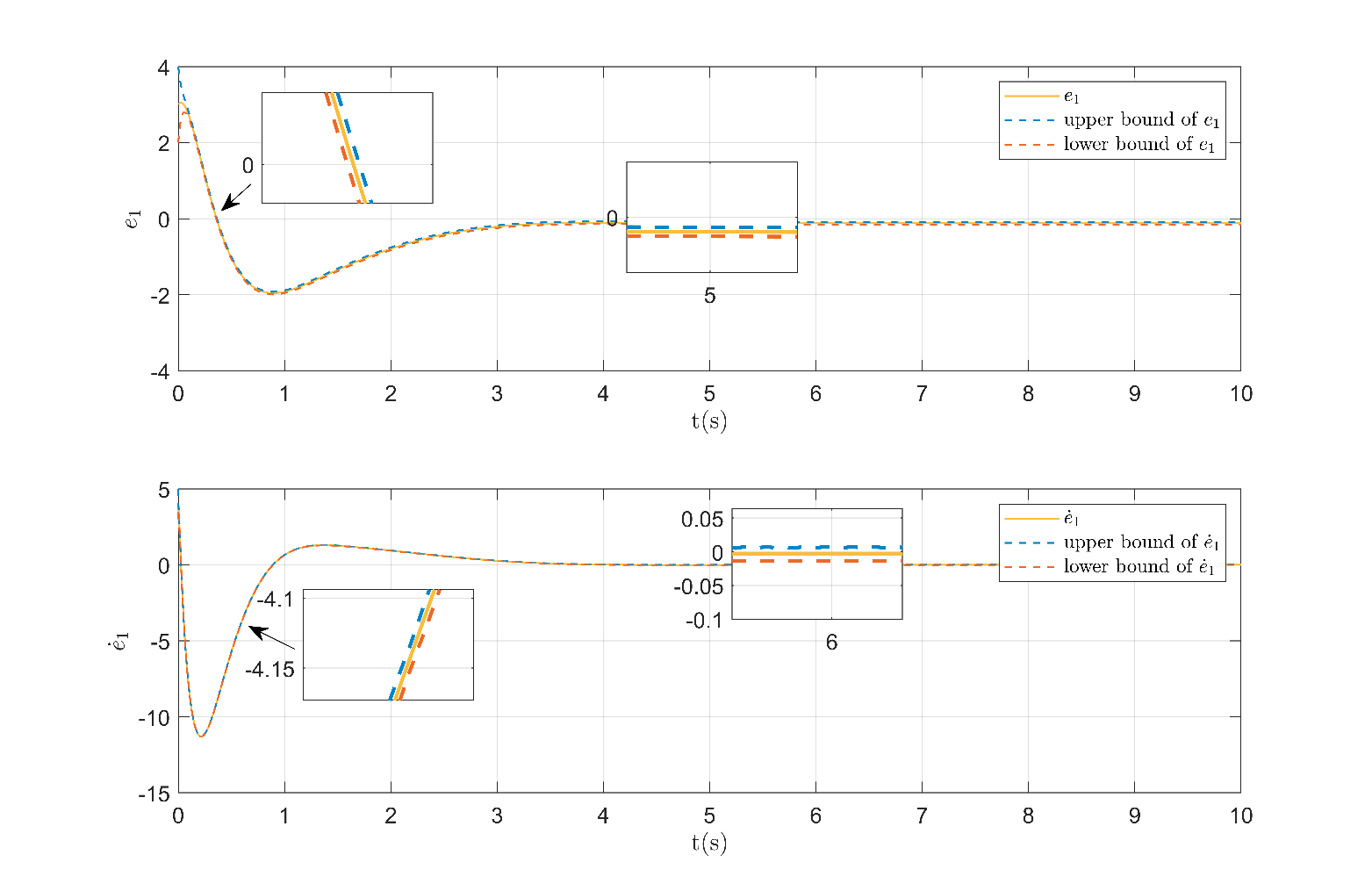}
	\caption{Safety monitoring of lateral position error $e_1$ and its derivative $\dot{e}_1$.}
	\label{FIG:3}
\end{figure}

\begin{figure}[ht]
	\centering
		\includegraphics[scale=.43]{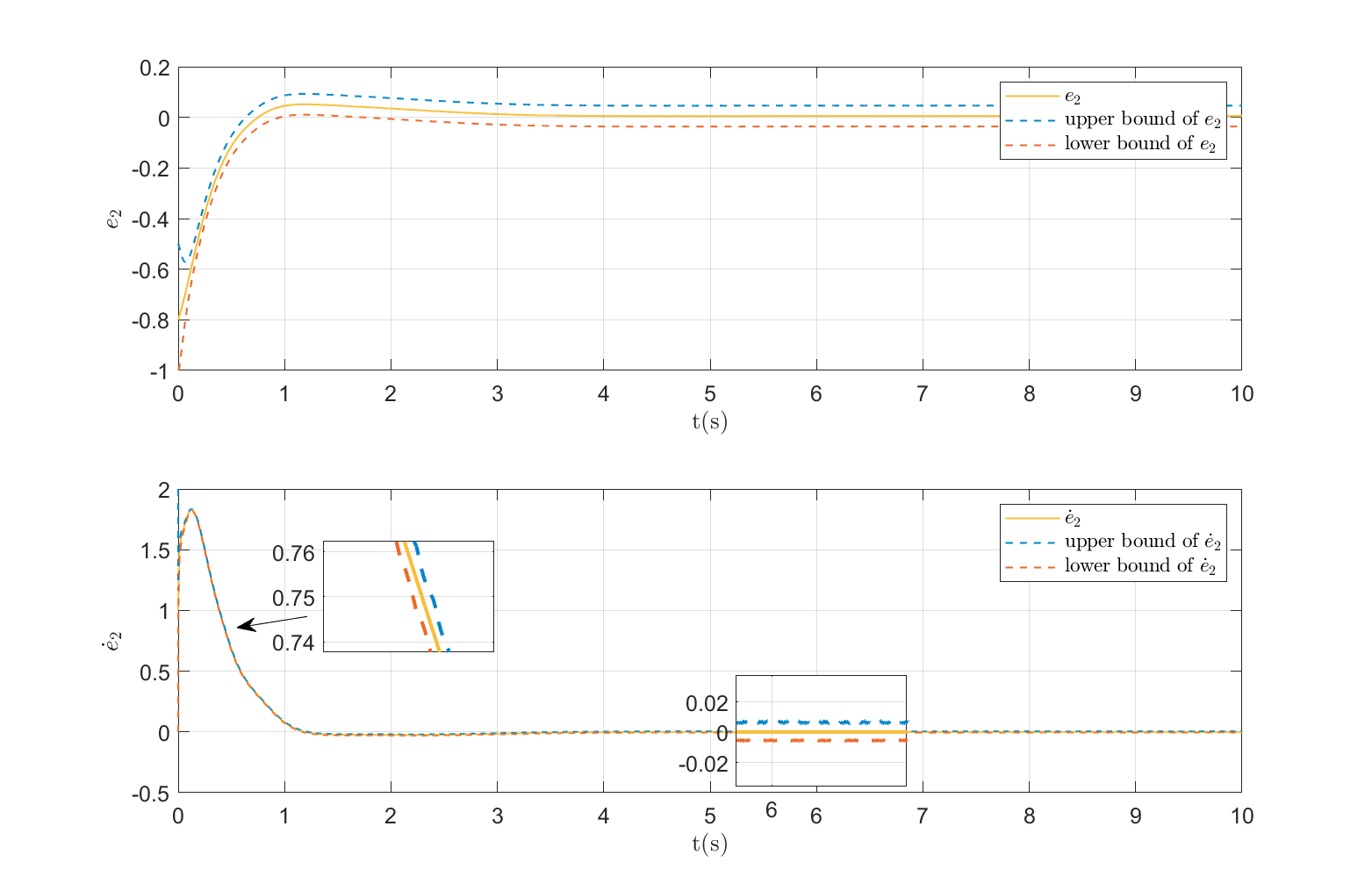}
	\caption{Safety monitoring of yaw angle error $e_2$ and its derivative $\dot{e}_2$.}
	\label{FIG:4}
\end{figure}

\section{Conclusions}
This paper presents a possible solution to the problem of run-time safety monitoring of dynamical systems embedded with neural network components. A design approach for a safety monitor is proposed for the system characteristics. The safety monitor works as a Luenberger-type interval observer, which estimates the upper and lower bounds of the state run-time trajectory in real time. The design process of the interval observer consists of two main components: the two auxiliary neural networks and the observer gain. The two auxiliary neural networks can be obtained from the neural network embedded in the original system. The presence of nonlinear activation functions in neural networks makes it difficult to apply traditional control theory to calculate observer gains $\underline{L}$ \wesley{and} $\overline{L}$.
To solve this problem, we use quadratic constraints (QCs) to abstract the nonlinear activation functions in neural networks. The computational problem of observer gain is expressed in a series of convex optimization problems. The interval observer design method is applied to the lateral vehicle control system to verify the correctness of the proposed solutions. The correction of neural network operation in the event of security problems needs to be considered in future work. Further applications to dynamical systems with more complex behaviors such as switched or hybrid systems  \citep{zhu2019quasi,xiang2017output,li2020neural} will be also considered in the future.

\bibliographystyle{apacite}
\bibliography{ref}

\end{document}